\declaretheorem{theorem}
\declaretheorem[sibling=theorem]{lemma}
\declaretheorem[style=definition]{definition}
\declaretheorem[style=definition]{example}
\renewcommand{\vec}[1]{\bm{#1}}
\newcommand{\dif}[1]{\mathop{\mathrm{d}#1}\nolimits}
\title{
    Online Nash Welfare Maximization Without Predictions
    \thanks{This is the second version of the paper on arXiv, which improves exposition based on feedbacks that we received after posting the first version. The main results stay the same. We thank anonymous reviewers for insightful and constructive comments, and thank Arindam Khan for pointing out several inaccurate references to previous work.}
}
\author{
    Zhiyi Huang
    \thanks{University of Hong Kong. Email: zhiyi@cs.hku.hk, xkshu@cs.hku.hk.}
    \and
    Minming Li
    \thanks{City University of Hong Kong. Email: minming.li@cityu.edu.hk, t.z.wei-8@my.cityu.edu.hk.}
    \and
    Xinkai Shu
    \footnotemark[2]
    \and
    Tianze Wei
    \footnotemark[3]
}
\date{February 2023}
\begin{document}

\begin{titlepage}
    \thispagestyle{empty}
    \maketitle
    \begin{abstract}
        \thispagestyle{empty}
        The maximization of Nash welfare, which equals the geometric mean of agents' utilities, is widely studied because it balances efficiency and fairness in resource allocation problems.
Banerjee, Gkatzelis, Gorokh, and Jin (2022) recently introduced the model of online Nash welfare maximization for $T$ divisible items and $N$ agents with additive utilities with predictions of each agent's utility for receiving all items.
They gave online algorithms whose competitive ratios are logarithmic.
We initiate the study of online Nash welfare maximization \emph{without predictions}, assuming either that the agents' utilities for receiving all items differ by a bounded ratio, or that their utilities for the Nash welfare maximizing allocation differ by a bounded ratio.
We design online algorithms whose competitive ratios only depend on the logarithms of the aforementioned ratios of agents' utilities and the number of agents.

    \end{abstract}
\end{titlepage}

\section{Introduction}
\label{sec:introduction}

Suppose that you run a daycare center.
From time to time, some organizations donate candies for children who have different preferences over various kinds of candies:
Alice likes chocolates, Bob prefers gummy bears, Charlie favors lollipops, and so forth.
The candies come in abundance, so you may consider it as a resource allocation problem of \emph{divisible} items.
How would you distribute the candies to the children?

Naturally, we would like to allocate the candies based on the children's values for them.
We will assume that the children have \emph{additive utilities} for receiving bundles of candies.
Following this idea, it may be tempting to allocate the candies to the children in a way that maximizes the sum of children's utilities for the candies that they received, a.k.a., the social welfare.
The example below, however, highlights the potential unfairness of this approach.

\begin{example}
    Consider distributing two packs of chocolates and two packs of gummy bears to two children Alice and Bob.
    Alice has values $100$ and $15$ for receiving a pack of chocolates and gummy bears respectively.
    Bob, on the other hand, has values $1$ and $10$ for chocolates and gummy bears respectively.
\end{example}

The social welfare maximizing allocation gives everything to Alice for her higher values, and leaves nothing for Bob.
This is blatantly unfair, especially at a daycare center.

What if we distribute evenly and let each child have a pack of chocolates and a pack of gummy bears?
While this is fair, it is inefficient.
The gummy bears contribute little to Alice's utility, while the chocolates do not add much to Bob's utility.

The question then becomes how to allocate the candies if we want to strike a balance between efficiency and fairness.
In the above example, allocating both packs of chocolates to Alice and both packs of gummy bears to Bob is a decent option.
In general, maximizing the \emph{Nash welfare}, i.e., the geometric mean of the children's utilities, is a good proxy for balancing efficiency and fairness.
Indeed, the allocation that maximizes Nash welfare also satisfies several standard notions of fairness, such as envy-freeness~\cite{CaragiannisKMPSW:2019} and proportionality~\cite{Vazirani:AGT:2007}.
It also coincides with the aforementioned decent allocation in the above example.
Further, the Nash welfare maximization problem is \emph{scale invariant}:
if we increase an agent's values for all items by the same multiplicative factor, the Nash welfare maximizing allocation would remain the same.
Moreover, maximizing the Nash welfare with divisible items and additive utilities reduces to solving the Eisenberg-Gale convex program~\cite{EisenbergG:AnnaMathStat:1959}.
We can therefore compute this allocation in polynomial time.

There is one more complication though.
We cannot predict when the donations of candies will be made, what kinds of candies will be donated, and in what quantities.
When some organizations make a donation, we need to distribute the candies to the children immediately without much information about future donations.

Besides the toy example above, many real online resource allocation problems for heterogeneous agents require balance between fairness and efficiency, e.g., allocating food each day to people in need~\cite{prendergast2017food}, or allocating shared computational resources to users~\cite{hao2016online}. 
Hence, our task becomes designing online allocation algorithms for Nash welfare maximization,
whose performance is evaluated by \emph{competitive ratio}, that is, the worst ratio between the online algorithm's solution and the offline optimal solution.
From now on, we will more generally talk about items and agents.

This problem was first studied by \citet*{BanerjeeGGJ:SODA:2022}.
They noticed that under worst-case analysis, no algorithm can have a competitive ratio better than the trivial $O(N)$, where $N$ is the number of agents.
They then turned to the model of online algorithms with predictions, in which the online algorithm is further given some predictions of each agent's utility for receiving the set of all items, which they called the \emph{monopolist utility}.
If the predictions were accurate, their algorithm would be $O(\log N)$-competitive.%
\footnote{\citet{BanerjeeGGJ:SODA:2022} also gave an $O(\log T)$-competitive ratio where $T$ is the number of items, under the assumption that items have unit supplies. Since this paper considers arbitrary supplies, the dependence in $T$ is no longer valid.}
They further proved that the competitive ratio is asymptotically optimal for the problem.

\subsection{Our Contribution}

Coming up with accurate predictions is not an easy task at all.
Is it truly hopeless to design good online algorithms \emph{without predictions}?
The hard instance which \citet{BanerjeeGGJ:SODA:2022} designed to rule out the existence of good online algorithms without predictions requires the agents' values to differ by an exponential factor, a rare phenomenon in real resource allocation problems.
Can we design online algorithms for ``natural instances'' where agents' values are not as extreme?

We introduce the notions of \emph{balanced} and \emph{impartial} instances.
An instance is $\lambda$-\emph{balanced} if the agents' monopolist utilities differ by at most a multiplicative factor $\lambda$.
This is closely related to the model of online algorithms with predictions by \citet{BanerjeeGGJ:SODA:2022}.
If we had accurate predictions of the agents' monopolist utilities, we could then normalize the agents' values accordingly to obtain a $1$-balanced instance using the fact that Nash welfare maximization is scale invariant.
If we had $(\alpha, \beta)$-approximate predictions, i.e., if they were at worst an $\alpha$-factor larger or a $\beta$-factor smaller than the agents' monopolist utilities, the normalized instance would be $\alpha \beta$-balanced.
Therefore, any competitive online algorithm for balanced instances further implies a competitive online algorithm with predictions.

Our main result for $\lambda$-balanced instances is a $\log^{1+o(1)} (\lambda N)$-competitive online algorithm.
We further show that this competitive ratio is nearly optimal in the sense that no online algorithm can be better than $\log^{1-o(1)} (\lambda N)$ competitive.
By the aforementioned reduction, this also implies a $\log^{1+o(1)} (\alpha \beta N)$-competitive online algorithm for the model with $(\alpha, \beta)$-approximate predictions, which improves the $O(\alpha \log \beta N)$-competitive online algorithm given by \citet{BanerjeeGGJ:SODA:2022} when $\alpha$ is sufficiently large.

On the other hand, we say that an instance is \emph{$\mu$-impartial} if the agents' utilities for the Nash welfare maximizing allocation differ by at most a multiplicative factor $\mu$.
For $\mu$-impartial instances, our main result is a $\log^{2+o(1)} \mu$-competitive online algorithm.
We remark that this competitive ratio is independent of the number of agents or items, and thus this would be an $O(1)$-competitive algorithm if the instance is $O(1)$-impartial.
In the special case when the agents' values for different items are binary, we have an improved competitive ratio $\log^{1+o(1)} \mu$;
we further prove that it is nearly optimal.

See Table~\ref{tab:summary} for a summary of our results.

\begin{table}[t]
    \centering

    \caption{Summary of results. Here $\lambda^*$ and $\mu^*$ are the smallest numbers for which the instance is $\lambda^*$-balanced and $\mu^*$-impartial. We omit the constants and $\log\log$ factors for brevity. We also omit the trivial bound of $N$.}
    \label{tab:summary}
    
    \begin{tabular}{p{3cm}p{6.5cm}p{3cm}}
        \toprule
         & Algorithm & Lower Bound \\
        \midrule
        $\lambda^*$-Balanced & $\log \lambda^* N$ & $\log \lambda^* N$ \\
        $\mu^*$-Impartial & $\min \big\{ \log \mu^* N \,,~ \log^2 \mu^* \big\}$ & $\log \mu^*$ \\
        $(\alpha, \beta)$-Predictions & $\alpha  \log \beta N$ \cite{BanerjeeGGJ:SODA:2022} $\to~ \log \alpha \beta N$ &  \\
        \bottomrule
    \end{tabular}
    
\end{table}

\subsection{Other Related Works}

There is a vast literature on online algorithms for resource allocation problems, such as online packing (c.f., \citet{AlonAABN:TALG:2006} and \citet{BuchbinderN:MOR:2009}) and online matching problems (c.f., \citet*{KarpVV:STOC:1990} and \citet{Mehta:FTTCS:2013}).
Most related to this paper is the work by \citet{DevanurJ:STOC:2012} on online matching with concave returns.
Maximizing the Nash welfare can be equivalently formulated as maximizing the sum of logarithms of the agents' utilities, a special case of the model of \citet{DevanurJ:STOC:2012} when the return functions are the logarithm.
However, their results do not imply competitive online algorithms for our problem because they studied the multiplicative competitive ratio with respect to (w.r.t.) the sum of log-utilities.
By contrast, \citet{BanerjeeGGJ:SODA:2022} and this paper consider the multiplicative competitive ratio w.r.t.\ the geometric mean of utilities, which corresponds to the additive approximation factor w.r.t.\ the sum of log-utilities.
\citet{BarmanKM:AAAI:2022} studied online algorithms for the more general $p$-mean welfare maximization which captures the Nash welfare as a special case when $p = 0$.
Their result for Nash welfare can be interpreted as an $O(\log^3 N)$-competitive algorithm for $1$-balanced instances;
by contrast our algorithm is $O(\log^{1+o(1)} (\lambda N)$-competitive for general $\lambda$-balanced instances. 

Nash welfare was introduced about 70 years ago \cite{Nash:1950, KanekoM:1979}. Its maximization has been extensively studied in the offline setting.
The problem of divisible items and additive utilities is equivalent to the Eisenberg-Gale convex program~\cite{EisenbergG:AnnaMathStat:1959}.
Much effort has been devoted to the problem of indivisible items and various kinds of utilities.
Even for additive utilities, the problem is NP-hard~\cite{Moulin:2004} and APX-hard~\cite{Lee:IPL:2017}.
\citet{ColeG:STOC:2015} gave the first constant-approximation algorithm for additive utilities.
Subsequently, \citet{AnariOSS:ITCS:2017} applied the theory of stable polynomials to get an $e$-approximation algorithm.
\citet{ColeDGJMVY:EC:2017} further improves the analysis to show a $2$-approximation.
Finally, \citet*{BarmanKV:EC:2018} gave a new algorithm that achieves the state-of-the-art $e^{1/e} < 1.45$-approximation.
The problem with more general utilities has also been studied. 
\citet*{GargHM:SODA:2018} designed a constant-approximation algorithm for budget-additive utilities.
\citet{AnariMOV:SODA:2018} introduced a constant-approximation algorithm for separable, piecewise-linear concave utilities.
For submodular utilities, \citet*{GargKK:SODA:2020} proposed an $O(n \log n)$-approximation algorithm, and \citet{LiV:FOCS:2022} recently obtained the first constant-approximation algorithm for submodular utilities.
Last but not least, \citet{BarmanBKS:ESA:2020} and \citet*{ChaudhuryGM:AAAI:2021} independently developed $O(n)$-approximation algorithms for the even more general subadditive utilities.

\citet*{AzarBJ:ESA:2010} studied an online resource allocation problem whose competitive algorithms are also competitive w.r.t.\ Nash welfare. 
However, they assumed that any agent's values for different items could only differ by a bounded ratio;
their competitive ratio is logarithmic in the numbers of agents and items as well as this bounded ratio.
By contrast, our notions of balance and impartiality compare different agents' utilities.
Their assumption rules out the possibility that an agent may have zero value for some item. Therefore, their results do not apply to our model. 
Besides that, there is a significant line of work about online resource allocation, which focuses on different objectives, e.g., \citet{gkatzelis2021fair} considered maximizing utilitarian social welfare with envy-freeness in online resource allocation. \citet{banerjee2022proportionally} studied an online public items allocation problem with predictions, where the objective is to achieve proportional fairness.



\clearpage

\section{Preliminaries}
\label{sec:prelim}

\subsection{Nash Welfare Maximization}

Consider $N$ agents and $T$ divisible items.
Denote the supply of an item $t$ by $s_t$.
We represent an allocation of items to agents by a non-negative real vector $\vec{x} = \big(x_{it}\big)_{1 \le i \le N, 1 \le t \le T}$, where $x_{it}$ is the amount of item $t$ allocated to agent $i$.
A feasible allocation must comply with the supply constraints of items, that is, it must satisfy $\sum_{i = 1}^N x_{it} \le s_t$ for any item $t$.
For each agent $1 \le i \le N$ and each item $1 \le t \le T$, $v_{it} \ge 0$ denotes agent $i$'s value for receiving a unit of item $t$.
The agents' utilities are \textit{additive}:
If an agent $i$ receives $x_{it}$ amount of each item $t$, its utility equals $u_i = \sum_{t = 1}^{T} x_{it} v_{it}$.

\begin{definition}
    The \textit{Nash welfare} of an allocation $\vec{x} = (x_{it})_{1 \le i \le N, 1 \le t \le T}$ is the geometric mean of the agents' utilities:
    \begin{equation*}
        \left(\prod_{i = 1}^{N} u_i\right)^{1/N} ~ =  \quad \prod_{i = 1}^{N} \left( \sum_{t = 1}^{T} x_{it} v_{it} \right)^{1/N}.
    \end{equation*}    
\end{definition}

We also consider an equivalent objective, namely, maximizing the sum of logarithms of utilities:
\begin{equation*}
    \max ~ \sum_{i=1}^{N} \log u_{i}.
\end{equation*}

Let $\vec{x^*}=(x_{it}^*)_{1 \le i \le N, 1 \le t \le T}$ denote the allocation that maximizes the Nash welfare, breaking ties arbitrarily.
Let $\vec{u^*}=(u_{i}^*)_{1 \le i \le N}$ denote the corresponding utilities of the agents.

Maximizing the Nash welfare is a good proxy for balancing the efficiency and fairness of the allocation.
Concretely, when the items are divisible, it implies that any agent would get at least $\frac{1}{N}$ of its monopolist utility for receiving all items, a notion of fairness known as proportionality.

\begin{lemma}[c.f., \citet*{Vazirani:AGT:2007}]
    \label{lem:nsw-proportionality}
    The Nash welfare maximizing allocation $\vec{x^*}$ and the corresponding utilities $\vec{u^*}$ satisfy that for any agent $1 \le i \le N$:
    \[
        u_i^* \ge \frac{1}{N} \sum_{t=1}^{T} s_t v_{it}
        ~.
    \]
\end{lemma}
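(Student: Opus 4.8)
The plan is to derive the bound from a first-order optimality condition for $\vec{x^*}$, viewed as the maximizer of the concave function $\sum_i \log u_i$ over the convex polytope of feasible allocations. Write $M_i := \sum_{t=1}^T s_t v_{it}$ for agent $i$'s monopolist utility, so that the claim reads $u_i^* \ge M_i / N$. First I would dispose of the degenerate cases: if $M_i = 0$ the inequality is trivial, and if $M_i > 0$ then at the optimum every agent's utility is strictly positive — otherwise the Nash welfare would be zero, whereas the allocation that gives each agent a $1/N$ share of every item is feasible and yields positive utility for every agent (note each agent $k$ with $M_k>0$ gets utility $\tfrac{1}{N}M_k>0$ there). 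From now on I may assume $u^*_k > 0$ for all $k$.

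Next I would fix the agent $i$ and build a one-parameter family of feasible allocations that ``moves mass toward $i$''. For $\epsilon \in [0,1]$, let $x_{it}(\epsilon) = (1-\epsilon) x^*_{it} + \epsilon s_t$ and $x_{jt}(\epsilon) = (1-\epsilon)x^*_{jt}$ for $j \ne i$; this is feasible since $\sum_k x_{kt}(\epsilon) = (1-\epsilon)\sum_k x^*_{kt} + \epsilon s_t \le s_t$ for every item $t$. Along this family, $u_i(\epsilon) = (1-\epsilon)u_i^* + \epsilon M_i$ and $u_j(\epsilon) = (1-\epsilon)u_j^*$ for $j \ne i$, so the objective
\[
    f(\epsilon) := \sum_{k=1}^N \log u_k(\epsilon) = \log\big((1-\epsilon)u_i^* + \epsilon M_i\big) + \sum_{j \ne i}\log\big((1-\epsilon)u_j^*\big)
\]
is well-defined and differentiable near $\epsilon = 0$.

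The final step is to invoke optimality. Since $\vec{x^*} = \vec{x}(0)$ maximizes $f$ along this feasible segment, the right derivative at the endpoint must satisfy $f'(0) \le 0$. A short computation gives
\[
    f'(0) = \frac{M_i - u_i^*}{u_i^*} + \sum_{j \ne i}\frac{-u_j^*}{u_j^*} = \frac{M_i}{u_i^*} - N,
\]
and rearranging $\frac{M_i}{u_i^*} - N \le 0$ yields exactly $u_i^* \ge \frac{1}{N}\sum_{t=1}^T s_t v_{it}$. I expect the only real subtlety to be the bookkeeping around zero utilities and the justification that all $u_k^*$ are positive at the optimum; once that is in place, the construction of the perturbation and the derivative computation are entirely routine. (This is, of course, the classical proportionality guarantee of the Eisenberg–Gale program / competitive equilibrium, specialized to our setting.)
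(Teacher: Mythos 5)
Your proof is correct and takes essentially the same route as the paper: the paper perturbs $\vec{x^*}$ toward the allocation giving agent $i$ everything (its $x'_{jt}=x^*_{jt}/(1+\varepsilon)$, $x'_{it}=(x^*_{it}+\varepsilon s_t)/(1+\varepsilon)$ is exactly your $\vec{x}(\epsilon)$ under the reparametrization $\epsilon=\varepsilon/(1+\varepsilon)$) and derives a contradiction for sufficiently small $\varepsilon$, which is just the contrapositive of your first-order condition $f'(0)\le 0$. Aside from your slightly more explicit handling of zero utilities (which the paper leaves implicit), the two arguments coincide.
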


\begin{proof}
    Suppose for contrary that there is an agent $i$ such that $u_i^* < \frac{1}{N} \sum_{t=1}^{T} s_t v_{it}$.
We argue that it would be beneficial to allocate more to agent $i$.
Concretely, consider an alternative allocation $\vec{x'} = (x'_{it})_{1 \le i \le N, 1 \le t \le T}$ defined as:
\[
    x'_{jt} =
    \begin{cases}
        \displaystyle \frac{x_{jt}^*}{1 + \varepsilon} & \mbox{, if $j \ne i$;} \\[2.5ex]
        \displaystyle \frac{x_{it}^* + \varepsilon s_t}{1 + \varepsilon} & \mbox{, if $j = i$.}
    \end{cases}
\]

This is a feasible allocation because for any item $1 \le t \le T$:
\[
    \sum_{j=1}^N x'_{jt} = \frac{1}{1+\varepsilon} \left( \sum_{j \ne i}^N x_{jt} + \big( x_{it} + s_t \varepsilon \big) \right)
    \le \frac{1}{1+\varepsilon} \left( s_t + s_t \varepsilon \right) = s_t
    ~.
\]

Further comparing the agents' utilities to the counterparts in the assumed optimal allocation, we have:
\[
    u'_i = \sum_{t=1}^T x'_{it} v_{it} = \frac{1}{1+\varepsilon} \left( \sum_{t=1}^T x^*_{it} v_{it} + \varepsilon \sum_{t=1}^T s_t v_{it} \right) = \frac{1}{1+\varepsilon} \left( u^*_i + \varepsilon \sum_{t=1}^T s_t v_{it} \right)
    ~,
\]
and for any agent $j \ne i$ we have $u'_j = \frac{1}{1+\varepsilon} u^*_j$.
Hence, we get a contradiction that:
\[
    \frac{\left(\prod_{i = 1}^{N} u'_i\right)^{1/N}}{\left(\prod_{i = 1}^{N} u_i^*\right)^{1/N}} ~ = ~ \frac{\left( 1 + \varepsilon \frac{\sum_{t=1}^{T} s_t v_{it}}{u_i^*}\right)^{1/N}}{1 + \varepsilon} > 1
\]
for a sufficiently small $\varepsilon$ by the assumption of $u_i^* < \frac{1}{N} \sum_{t=1}^{T} s_t v_{it}$.
\end{proof}

\subsection{Online Algorithms}

This paper considers an online setting in which the items arrive one by one, while the agents are known from the beginning.
When each item arrives, the algorithm observes the item's supply and the agents' values for the item.
It must then decide how to allocate the item immediately and irrevocably.
We consider the ratio of the optimal Nash welfare to the expected Nash welfare of the algorithm's allocation.
The \emph{competitive ratio} of an algorithm is the maximum of this ratio over all possible instances.

\paragraph{Greedy Algorithms with Anticipated Utilities.}

A natural strategy to allocate an item $t$ is to first estimate how much utility each agent $i$ could get from the other items, denoted as $u'_{it}$ and referred to as agent $i$'s anticipated utility.
We may then allocate item $t$ greedily to maximize the Nash welfare based on the anticipated utilities.
For example, a natural yet conservative form of anticipated utilities in our model would be the agents' utilities from the previous items.
In other models, the anticipated utilities may depend on some prior knowledge of the agents' values; for instance, the algorithm of \citet{BanerjeeGGJ:SODA:2022} may be viewed as a greedy algorithm with anticipated utilities that depend on the agents' monopolist utilities.

It is therefore instructive to first examine some basic properties of this family of algorithms.
We may formulate the allocation problem conditioned on anticipated utilities $u'_{it}$ as a convex program:
\begin{align}
    \label{eqn:greedy-predicted-utility}
    \text{maximize} & \quad \sum_{i=1}^N \log(u'_{it} + v_{it} x_{it})
    \\
    \notag
    \text{subject to} & \quad \sum_{i=1}^N x_{it} \le s_t \quad \mbox{and} \quad x_{it} \ge 0 \mbox{ for any agent $i$.}
\end{align}

Since the objective is monotone in $x_{it}$'s, the optimal solution to the above program, denoted as $\vec{z}_t = (z_{it})_{1 \le i \le N}$, must satisfy the first constraint with equality, i.e.:
\begin{equation}
    \label{eqn:greedy-predicted-utility-allocate-all}
    \sum_{i=1}^N z_{it} = s_t
    ~.
\end{equation}

Further, for the optimal multiplier $\nu^* \ge 0$ of the first constraint and for any agent $1 \le i \le N$, the optimality conditions imply that:
%
%
\begin{equation}
    \label{eqn:greedy-predicted-utility-optimal}
    z_{it} = \max \left\{ \frac{1}{\nu^*}-\frac{u'_{it}}{v_{it}} ~, ~ 0 \right\}
    ~.
\end{equation}
We give the following lemma to bound the gain of the logarithm of Nash welfare when we follow the optimal solution from Eqn.~\eqref{eqn:greedy-predicted-utility-optimal} to allocate an item $t$.

\begin{lemma}
    \label{lem:greedy-predicted-utility-increase}
    The optimal solution $\vec{z}$ to convex program \eqref{eqn:greedy-predicted-utility} satisfies that:
    \[
        \sum_{i=1}^{N} \log(u'_{it} +v_{it} z_{it}) - \sum_{i=1}^{N} \log u'_{it} \geq s_t \cdot \max_{1 \le i \le N} \frac{v_{it}}{u'_{it} + v_{it} z_{it}}
        ~.
    \]
\end{lemma}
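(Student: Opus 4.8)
The plan is to use the KKT characterization in Eqn.~\eqref{eqn:greedy-predicted-utility-optimal} to rewrite both sides of the claimed inequality in terms of the optimal multiplier $\nu^*$, and then reduce to an elementary one-variable inequality. Throughout I assume $u'_{it} > 0$ for every agent $i$ (otherwise the left-hand side is $+\infty$ and there is nothing to prove), and I may assume $s_t > 0$, since for $s_t = 0$ Eqn.~\eqref{eqn:greedy-predicted-utility-allocate-all} forces $\vec{z}_t = \vec{0}$ and both sides vanish.

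First I would set $w_i := u'_{it} + v_{it} z_{it}$, the utility agent $i$ ends up with after item $t$ is allocated, and let $S = \{\, i : z_{it} > 0 \,\}$. By Eqn.~\eqref{eqn:greedy-predicted-utility-optimal} (equivalently, complementary slackness for program~\eqref{eqn:greedy-predicted-utility}), $w_i = v_{it}/\nu^*$ for every $i \in S$, while $w_i = u'_{it}$ and $v_{it}/u'_{it} \le \nu^*$ for $i \notin S$. Since $s_t > 0$, Eqn.~\eqref{eqn:greedy-predicted-utility-allocate-all} guarantees $S \neq \emptyset$, and the two facts above give $\max_{1 \le i \le N} \frac{v_{it}}{w_i} = \nu^*$. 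Hence the right-hand side of the lemma equals $s_t \nu^* = \nu^* \sum_{i \in S} z_{it}$, using Eqn.~\eqref{eqn:greedy-predicted-utility-allocate-all} together with $z_{it} = 0$ for $i \notin S$.

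For the left-hand side, the terms with $i \notin S$ contribute nothing, so it equals $\sum_{i \in S} \log \frac{w_i}{u'_{it}} = \sum_{i \in S} \log\!\big(1 + \tfrac{v_{it} z_{it}}{u'_{it}}\big)$. It therefore suffices to show the per-agent bound $\log\!\big(1 + \tfrac{v_{it} z_{it}}{u'_{it}}\big) \ge \nu^* z_{it}$ for each $i \in S$. Writing $a = v_{it} z_{it}/u'_{it} \ge 0$, one has $1 + a = w_i/u'_{it}$, so $\frac{a}{1+a} = \frac{v_{it} z_{it}}{w_i} = \nu^* z_{it}$ by the characterization $w_i = v_{it}/\nu^*$ for $i \in S$. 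Thus the per-agent bound is exactly the elementary inequality $\log(1+a) \ge \frac{a}{1+a}$ for $a \ge 0$, which follows from $\log x \le x - 1$ applied to $x = \tfrac{1}{1+a}$ (or from noting that $a \mapsto \log(1+a) - \frac{a}{1+a}$ vanishes at $0$ and has derivative $\frac{a}{(1+a)^2} \ge 0$). Summing over $i \in S$ shows the left-hand side is at least $\nu^* \sum_{i \in S} z_{it}$, matching the right-hand side.

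The only step that needs care is the clean reduction: recognizing that after substituting $w_i = v_{it}/\nu^*$, the maximum on the right-hand side collapses to the single scalar $\nu^*$ and the per-agent gain collapses to $\nu^* z_{it}$, so that everything lines up term by term. Once this bookkeeping is set up, the rest is just the textbook bound $\log(1+a) \ge a/(1+a)$; there is no genuine combinatorial obstacle.
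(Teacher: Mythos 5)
Your proof is correct and follows essentially the same route as the paper: your per-agent bound $\log(1+a) \ge a/(1+a)$ is exactly the paper's concavity-of-$\log$ step, and your use of Eqn.~\eqref{eqn:greedy-predicted-utility-optimal} to identify $\max_i v_{it}/(u'_{it}+v_{it}z_{it}) = \nu^*$ on the allocated agents, followed by $\sum_i z_{it} = s_t$, mirrors the paper's argument, just with the bookkeeping (and the $s_t=0$, $u'_{it}=0$ edge cases) made more explicit.
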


\begin{proof}
    By the concavity of logarithm, the left-hand side is at least:
    \[
        \sum_{i=1}^N \frac{v_{it} z_{it}}{u'_{it} + v_{it} z_{it}}
    \]

    Further by Eqn.~\eqref{eqn:greedy-predicted-utility-optimal}, if $z_{it} > 0$ then $i$ maximizes $\frac{v_{it}}{u'_{it} + v_{it} z_{it}}$ (with maximum value $\nu^*$).
    Hence, the above equals:
    \[
        \max_{1 \le i \le N} \frac{v_{it}}{u'_{it} + v_{it} z_{it}} ~ \sum_{i=1}^N z_{it}
        ~.
    \]

    The lemma now follows by Eqn.~\eqref{eqn:greedy-predicted-utility-allocate-all}.
\end{proof}

\paragraph{Randomized versus Deterministic Algorithms.}
For divisible items, the best competitive ratios achievable by randomized and deterministic online algorithms are the same.
Given any randomized algorithm, we may convert it into a deterministic one such that for each item $t$, the amount of item $t$ that the deterministic algorithm allocates to any agent $i$ equals the expected amount that the randomized algorithm would allocate.
Since the Nash welfare is a concave function of the allocation, the deterministic algorithm yields a weakly larger Nash welfare.
Therefore, our positive results will describe online algorithms in their randomized form if that simplifies the presentation.
In our negative results, on the other hand, it suffices to consider deterministic algorithms.

\subsection{Balanced and Impartial Instances}

\citet{BanerjeeGGJ:SODA:2022} showed that for arbitrary instances, no online algorithm can have a competitive ratio better than the trivial $O(N)$.
They then explored the model of online algorithms with predictions, where the algorithms know the agents' monopolist utilities for receiving all items.

We observe that the agents' values in the hard instances of \citet{BanerjeeGGJ:SODA:2022} are lopsided:
the largest value is exponentially larger than the smallest.
This is rarely the case in real resource allocation problems.
Recall the example of allocating food and computational resources. It is almost impossible to see agents' values towards the same item differ substantially.
Hence this paper aims to design algorithms with good performance on ``natural instances'' in which different agents are of ``similar importance''.
We define two notions of ``natural instances''.
Motivated by \citet{BanerjeeGGJ:SODA:2022}, our first notion compares the agents' monopolist utilities for receiving all items.

\begin{definition}
    \label{def:balanced-instance}
    An instance is $\lambda$-\emph{balanced} if for any agents $1 \le i, j \le N$:
    \[
    \sum_{t = 1}^{T} s_t v_{it} \le \lambda \sum_{t = 1}^{T} s_t v_{jt}
    ~.
    \]
    We shall refer to $\lambda^* = \frac{\max_{1 \le i \le N} \sum_{t = 1}^{T} s_t v_{it}}{\min_{1 \le i \le N} \sum_{t = 1}^{T} s_t v_{it}}$ as the \emph{balance ratio} of the instance.
\end{definition}

We remark again that competitive online algorithms for $\lambda$-balanced instances can be transformed into online algorithms with predictions in the model of \citet{BanerjeeGGJ:SODA:2022}.
Given prediction $\tilde{V}_i$ of each agent $i$'s monopolist utility $V_i = \sum_{t=1}^{T} s_t v_{it}$ that is $(\alpha, \beta)$-approximate, i.e., $\alpha V_i \ge \tilde{V}_i \ge \frac{1}{\beta} V_i$, we can normalize agent $i$'s values to be $v_{it}' = \frac{v_{it}}{\tilde{V}_i}$ for all items $t$ to get an $\alpha \beta$-balanced instance.

Our second notion examines whether different agents get similar utilities in the Nash welfare maximizing allocation.

\begin{definition}
    \label{def:impartial-instance}
    
    Given any instance, an allocation $\vec{x}$ is $\mu$-\emph{impartial} if the corresponding utilities $\vec{u}$ satisfy that for any agents $1 \le i, j \le N$:
    \[
    u_i \le \mu \cdot u_j
    ~.
    \]
    An instance is $\mu$-impartial if every Nash welfare maximizing allocation $\vec{x}^{*}$ is $\mu$-impartial.
    Let $\mu^*$ be the smallest value of $\mu$ such that the instance is $\mu$-impartial;
    we shall refer to $\mu^*$ as the \emph{impartiality ratio} of the instance.
    
\end{definition}

Impartiality has a simple yet important implication:
For each item, we may ignore the agents whose values for the item are much smaller than the highest value.
We make this precise with the next lemma.

\begin{lemma}
    \label{lem:maximum-mu-value}
    For any agent $i$ and any item $t$, if $x_{it}^*>0$ then $v_{it}\geq \frac{1}{\mu^{*}} \max_{1 \le j \le N}v_{jt}$.
\end{lemma}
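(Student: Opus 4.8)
The plan is to argue by contradiction with a local exchange argument, in the spirit of the proof of Lemma~\ref{lem:nsw-proportionality}. Fix an agent $i$ and an item $t$ with $x_{it}^* > 0$, and let $j$ be an agent with the largest value for item $t$, i.e.\ $v_{jt} = \max_{1 \le k \le N} v_{kt}$. If $i = j$ the claim is immediate because $\mu^* \ge 1$, so assume $i \ne j$ and suppose, toward a contradiction, that $v_{it} < \frac{1}{\mu^*} v_{jt}$.

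First I would note that at any Nash welfare maximizing allocation every agent has strictly positive utility: otherwise the Nash welfare is zero, whereas (assuming as usual that every agent has positive monopolist utility) allocating each agent a $\frac{1}{N}$ fraction of every item already yields positive Nash welfare. In particular $u_i^* > 0$ and $u_j^* > 0$. Hence, for a sufficiently small $\delta > 0$, the allocation obtained from $\vec{x}^*$ by moving $\delta$ units of item $t$ from agent $i$ to agent $j$ — replace $x_{it}^*$ by $x_{it}^* - \delta$ and $x_{jt}^*$ by $x_{jt}^* + \delta$ — is feasible, since it only decreases $x_{it}$ and preserves $\sum_{k} x_{kt}$, and it keeps $u_i$ and $u_j$ positive.

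Next I would analyze the effect of this perturbation on the equivalent objective $\sum_{k} \log u_k$. Only agents $i$ and $j$ are affected, and the change equals
\[
    \Big[ \log\big(u_j^* + \delta v_{jt}\big) - \log u_j^* \Big] + \Big[ \log\big(u_i^* - \delta v_{it}\big) - \log u_i^* \Big],
\]
whose derivative at $\delta = 0$ is $\frac{v_{jt}}{u_j^*} - \frac{v_{it}}{u_i^*}$. Since $\vec{x}^*$ is $\mu^*$-impartial we have $u_i^* \ge \frac{1}{\mu^*} u_j^*$, and therefore
\[
    \frac{v_{it}}{u_i^*} \le \frac{\mu^* v_{it}}{u_j^*} < \frac{v_{jt}}{u_j^*},
\]
where the strict inequality is precisely the assumption $v_{it} < \frac{1}{\mu^*} v_{jt}$. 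So the derivative is strictly positive, and for small enough $\delta$ the perturbed allocation has strictly larger $\sum_k \log u_k$, hence strictly larger Nash welfare — contradicting the optimality of $\vec{x}^*$. This forces $v_{it} \ge \frac{1}{\mu^*} v_{jt} = \frac{1}{\mu^*} \max_{1 \le k \le N} v_{kt}$.

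The only slightly delicate points are ensuring the utilities stay positive so that the logarithms and their derivatives are well defined, and choosing $\delta$ small enough that $u_i^* - \delta v_{it} > 0$; both are routine. As an alternative that avoids the perturbation, one could invoke the optimality characterization underlying Eqn.~\eqref{eqn:greedy-predicted-utility-optimal} directly: $x_{it}^* > 0$ implies $\frac{v_{it}}{u_i^*} = \max_{k} \frac{v_{kt}}{u_k^*} \ge \frac{v_{jt}}{u_j^*}$, which combined with $u_i^* \ge \frac{1}{\mu^*} u_j^*$ gives the bound at once; I would likely present the self-contained exchange argument above.
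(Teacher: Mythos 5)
Your proof is correct and follows essentially the same route as the paper: a contradiction via a local exchange that moves a small amount of item $t$ from agent $i$ to a highest-value agent $j$, using $\mu^*$-impartiality ($u_j^* \le \mu^* u_i^*$) together with the assumed $v_{it} < \frac{1}{\mu^*} v_{jt}$ to show the Nash welfare strictly increases. The only cosmetic difference is that you differentiate $\log u_i + \log u_j$ while the paper compares the product $(u_i^* - \varepsilon v_{it})(u_j^* + \varepsilon v_{jt})$ directly; both hinge on the same inequality $u_i^* v_{jt} - u_j^* v_{it} > 0$.
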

\begin{proof}
    Suppose for contradiction that there are agent $i$ and item $t$ for which we have $x_{it}^*>0$ but $v_{it}<\frac{1}{\mu^{*}} \max_{1 \le j \le N}v_{jt}$.
We argue that we should have allocated less item $t$ to agent $i$.
Let $j$ be an agent with highest value $v_{jt}$ for item $t$.
Consider reallocating an $\varepsilon$ amount of item $t$ from agent $i$ to agent $j$.
We claim that the Nash welfare increases for a sufficiently small $\varepsilon$.
Since the other agents' utilities stay the same, it suffices to prove that:
\begin{equation*}
    (u_i^* - \varepsilon v_{it}) (u_j^* + \varepsilon v_{jt}) - u_i^* u_j^* = \varepsilon (u_i^* v_{jt} - u_j^* v_{it}) - \varepsilon ^2 v_{it} v_{jt}
\end{equation*}
is positive.
Since $v_{it} < \frac{1}{\mu^*} v_{jt}$ by our assumption for contradiction and $u_j^* \le \mu^* u_i^*$ by the definition of $\mu^*$, we get that $u_i^* v_{jt} - u_j^* v_{it} > 0$.
Hence, the above is positive for a sufficiently small $\varepsilon$.
\end{proof}

Finally, the balance ratio and impartiality ratio are within a factor $N$ from each other.

\begin{lemma}
    \label{lem:balance-impartial}
    For any instance, we have $\lambda^* \leq \mu^* N$ and $\mu^* \leq \lambda^* N$.
\end{lemma}

\begin{proof}
    We first prove that $\lambda^* \le \mu^* N$.
    Let $i$ and $j$ be the agents with the maximum and minimum values for receiving all items respectively.
    We have:
    \begin{align*}
        \sum_{t=1}^{T} s_t v_{it}
        &
        = \lambda^* \sum_{t=1}^{T} s_t v_{jt} \geq \lambda^* u_j^* \\
        &
        \ge \frac{\lambda^*}{\mu^*} u_i^*
        && \mbox{(definition of $\mu^*$)} \\
        &
        \ge \frac{\lambda^*}{\mu^* N} \sum_{t=1}^{T} s_t v_{it}
        ~.
        && \mbox{(Lemma~\ref{lem:nsw-proportionality})}
    \end{align*}
    
    Cancelling $\sum_{t=1}^{T} s_t v_{it}$ on both sides proves the inequality.

    Next, we show that $\mu^* \le \lambda^* N$ using a similar argument.
    Let $i$ and $j$ be the agents with the minimum and maximum utilities in the Nash welfare maximizing allocation respectively.
    We have:
    \begin{align*}
        \sum_{t=1}^{T} s_t v_{it}
        &
        \geq \frac{1}{\lambda^*} \sum_{t=1}^{T} s_t v_{jt}
        && \mbox{(definition of $\lambda^*$)} \\
        &
        \geq \frac{1}{\lambda^*} u_j^* = \frac{\mu^*}{\lambda^*} u_i^* \\
        &
        \ge \frac{\mu^*}{\lambda^* N} \sum_{t=1}^{T} s_t v_{it}
        ~.
        && \mbox{(Lemma~\ref{lem:nsw-proportionality})}
    \end{align*}
    
    Cancelling $\sum_{t=1}^{T} s_t v_{it}$ on both sides proves the inequality.
\end{proof}

In conclusion, these two notions reflect the ``naturality'' of an instance, as a replacement of prediction in previous works.
In Section \ref{sec:balance}, we study balanced instances and achieve $O(\log \lambda^* N)$-competitive ratio.
For impartial instances, besides the $O(\log \mu^* N)$ result as a corollary of Lemma \ref{lem:balance-impartial}, in Section \ref{sec:impartial} we give an online algorithm whose competitive ratio is $O(\log^2 \mu^*)$, only depending on the impartiality ratio. (All $\log\log$ factors are omitted here.)
Therefore, hopefully the impartiality ratio is the better one to use.

\section{Balanced Instances}
\label{sec:balance}

This section studies balanced instances.
Section~\ref{sec:half-greedy-bounded} introduces our algorithm under an additional assumption that we were given an upper bound of the balance ratio (Section~\ref{sec:half-greedy-bounded}).
Then, Section~\ref{sec:half-greedy-general} explains how to remove this assumption by guessing the balance ratio, while losing at most $\log\log$ factors in the competitive ratio.

\subsection{Algorithm with a Known Upper Bound of the Balance Ratio}
\label{sec:half-greedy-bounded}

Suppose that we are given an upper bound $\lambda$ of the instance's balance ratio $\lambda^*$.
In other words, we know that the instance is $\lambda$-balanced.
The algorithm will divide each item $t$'s supply equally into two halves.
On the one hand, it allocates the first half equally to all agents, a naïve strategy that is certainly fair but is not efficient enough to approximately maximize the Nash welfare on its own.
On the other hand, it greedily allocates the second half of the item to maximize the Nash welfare assuming that each agent would get not only their utilities for the second halves of the previous items allocated to them, but also a fraction of the sum of \emph{all agents}' monopolist utilities for all known items.
Concretely, let $\vec{z}_t = (z_{it})_{1 \le i \le N}$ denote the allocation of the second half of item $t$.
For anticipated utilities:
\begin{equation}
    \label{eqn:half-and-half-anticipated-utility}
    u'_{it} ~ = ~ \frac{1}{2 \lambda N^2} \underbrace{\sum_{j=1}^N \sum_{t'=1}^t v_{jt'} s_{t'}}_{\substack{\text{sum of monopolist utilities}\\ \text{for all known items,}\\ \textbf{including item $t$}}} + \underbrace{\vphantom{\sum_{j=1}^N} \sum_{t'=1}^{t-1} v_{it'} z_{it'}}_{\substack{\text{$i$'s utility for the second halves}\\ \text{of previous items allocated to $i$,}\\ \textbf{excluding item $t$}}}
\end{equation}
the algorithm chooses $\vec{z}$ to maximize $\sum_{i=1}^N \log \big( u'_{it} + v_{it} x_{it} \big)$ subject to $\sum_{i=1}^N z_{it} \le \frac{s_t}{2}$ and $z_{it} \ge 0$ for all agents $i$.
We call this algorithm Half-and-Half.
See Algorithm~\ref{alg:half-greedy-lambda} for a formal definition.

\begin{algorithm}[t]
    \caption{\textbf{Half-and-Half} (for  $\lambda$-balanced instances)}
    \label{alg:half-greedy-lambda}
    
    \For{\text{\rm each item $1 \le t \le T$}}
    {
        Let $y_{it} = \frac{s_t}{2N}$ for all agents $1 \le i \le N$.\\
        Let $z_{it}$ maximize (for anticipated utilities $u'_{it}$ defined in Eqn.~\eqref{eqn:half-and-half-anticipated-utility})
        \[
            \sum_{i=1}^N \log (u'_{it} + v_{it} z_{it})
        \]
        subject to $\sum_{i=1}^N z_{it} \le \frac{s_t}{2}$ and $z_{it} \ge 0$ for all agents $1 \le i \le N$.
        
        Allocate $x_{it} = y_{it}+z_{it}$ amount of item $t$ to each agent $1 \le i \le N$.
    }
\end{algorithm}

\begin{theorem}
    \label{thm:half-greedy-lambda}
    Algorithm \ref{alg:half-greedy-lambda} is $O(\log \lambda N)$-competitive.
\end{theorem}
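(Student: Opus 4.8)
The plan is to compare the logarithm of the algorithm's Nash welfare against that of the optimal allocation $\vec{x^*}$, using the two halves of each item for two different purposes. The first half, allocated uniformly, gives every agent a guaranteed ``floor'' utility: after processing all items, agent $i$'s utility from the first halves alone is $\frac{1}{2N}\sum_t s_t v_{it} \ge \frac{1}{2N} u_i^*$ by monotonicity (in fact the monopolist utility dominates $u_i^*$), so $\log u_i \ge \log u_i^* - \log(2N)$ for free. This already shows Half-and-Half is $O(N)$-competitive in the exponent, i.e.\ loses an additive $N\log(2N)$ in $\sum_i \log u_i$; the real work is to show the greedy second half recovers all but an $O(\log \lambda N)$ additive loss \emph{per agent on average}, i.e.\ an $O(N\log\log\lambda N)$... no — an additive loss of $N \log O(\log \lambda N)$ total in $\sum_i \log u_i$.

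The key quantitative step is to track $\Phi_t := \sum_i \log(u'_{it} + v_{it} z_{it})$, the potential after deciding item $t$'s second half, versus $\sum_i \log u'_{i,t+1}$, the potential before deciding item $t+1$. By Lemma~\ref{lem:greedy-predicted-utility-increase} applied to the program with cap $\frac{s_t}{2}$ and anticipated utilities $u'_{it}$, allocating the second half of item $t$ increases $\sum_i \log u'_{it}$ by at least $\frac{s_t}{2}\cdot \max_i \frac{v_{it}}{u'_{it}+v_{it}z_{it}}$. The definition of $u'_{it}$ in Eqn.~\eqref{eqn:half-and-half-anticipated-utility} is engineered so that when we pass from step $t$ to step $t+1$, the anticipated utility of each agent changes in a controlled way: the ``floor'' term grows by exactly $\frac{1}{2\lambda N^2}\sum_j v_{j,t+1}s_{t+1}$ (a telescoping increase), and the ``past allocation'' term for agent $i$ picks up precisely $v_{it}z_{it}$. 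So $\sum_i \log u'_{i,t+1} - \sum_i \log u'_{it} \le \sum_i \log(u'_{it}+v_{it}z_{it}) - \sum_i \log u'_{it}$ plus a small slack coming from the floor increment; the point is that the greedy gain bounds the change, and the telescoped floor increments sum to something like $\frac{1}{2\lambda N}$ times a single monopolist utility, which is negligible.

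To finish, I would sum over $t$: the total greedy gain $\sum_t \frac{s_t}{2}\max_i \frac{v_{it}}{u'_{it}+v_{it}z_{it}}$ telescopes to $\sum_i \log u'_{i,T+1} - \sum_i \log u'_{i,1}$ up to the floor slack, and I need to lower bound this by $\sum_i \log u_i^* - N\log O(\log\lambda N)$. For the upper/lower comparison I would use the standard convex-duality trick from \citet{DevanurJ:STOC:2012}: the optimal dual multiplier $\nu^*_t$ at step $t$ equals $\max_i \frac{v_{it}}{u'_{it}+v_{it}z_{it}}$, and by Eqn.~\eqref{eqn:greedy-predicted-utility-optimal}, allocating $x^*_{it}$ amount of each item's second half to agent $i$ in the optimal solution contributes at most $\nu^*_t$ per unit to agent $i$'s utility ratio; summing $\sum_t \nu^*_t \cdot \frac{s_t}{2}$ against the optimum via the inequality $\log(a+b)-\log a \le b/a$ and $\log$-concavity gives the bound. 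The initial potential $\sum_i \log u'_{i,1}$ is $\Theta(N\log(\text{monopolist utility}/\lambda N^2))$ and the balance assumption ensures all $N$ monopolist utilities are within $\lambda$ of each other, so the initial floor is within an $O(\log\lambda N)$ additive factor (in each coordinate) of $\log u_i^*$; combining with the first-half floor bound closes the gap.

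The main obstacle I anticipate is the bookkeeping in the telescoping argument: the anticipated utility $u'_{it}$ mixes a \emph{global} term (sum over all agents of monopolist utilities for known items) with a \emph{local} term (agent $i$'s own past greedy allocations), and one has to verify that the global term's growth across steps is genuinely dominated by — or at worst comparable to — the greedy gain guaranteed by Lemma~\ref{lem:greedy-predicted-utility-increase}, rather than swamping it. Getting the constant $\frac{1}{2\lambda N^2}$ exactly right so that (a) the initial floor is not too small, (b) the per-step global increment is absorbed, and (c) the final comparison to $u_i^*$ via Lemma~\ref{lem:nsw-proportionality} goes through, is the delicate part; everything else is routine concavity and telescoping.
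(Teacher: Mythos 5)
Your setup matches the paper's: use the uniform first halves as a floor (this is exactly Lemma~\ref{lem:half-greedy-lambda-estimation}, which shows the first halves dominate the $\frac{1}{2\lambda N^2}$-floor term by $\lambda$-balancedness, so $u_i \ge \hat{u}_{iT}$), then telescope the greedy gains from Lemma~\ref{lem:greedy-predicted-utility-increase}. But the way you propose to close the argument has a genuine gap. Invoking the Devanur--Jain duality trick with $\nu^*_t = \max_i \frac{v_{it}}{u'_{it}+v_{it}z_{it}}$ and $\log(a+b)-\log a \le b/a$ yields a \emph{multiplicative} guarantee on $\sum_i \log u_i$, which, as the paper stresses, is useless here: the Nash-welfare competitive ratio corresponds to an \emph{additive} loss in $\sum_i \log u_i$. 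What actually closes the proof is a bootstrap: lower-bound the per-step gain by $\frac{1}{2}\sum_i \frac{v_{it}x^*_{it}}{u_i}$ (using $\sum_i x^*_{it}\le s_t$ and $u'_{it}+v_{it}z_{it}=\hat u_{it}\le u_i$), sum over $t$ to get roughly $\frac{1}{4}\sum_i \frac{u^*_i}{u_i} \ge \frac{N}{4}\Gamma$ by AM--GM, while the telescoped left side is at most $\sum_i \log u_i - \sum_i \log u^*_i + N\log\mathrm{poly}(\lambda N) = -N\log\Gamma + N\log\mathrm{poly}(\lambda N)$. The ratio $\Gamma$ thus appears on both sides, as $\Gamma$ and as $-\log\Gamma$, and solving $\frac{\Gamma}{4} \le -\log\Gamma + O(\log \lambda N)$ gives $\Gamma = O(\log\lambda N)$. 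Your sketch never produces an inequality of this form, and without it there is no route from "total greedy gain versus optimum" to the claimed additive $N\log O(\log\lambda N)$ loss.

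The second gap is the starting point of your telescoping. You claim $\sum_i \log u'_{i,1}$ is $\Theta\bigl(N\log(\text{monopolist utility}/\lambda N^2)\bigr)$ and hence within an additive $O(\log\lambda N)$ of $\log u^*_i$ per coordinate, but at $t=1$ the floor term in Eqn.~\eqref{eqn:half-and-half-anticipated-utility} contains only $\sum_j v_{j1}s_1$, the values for the \emph{first item alone}, which can be arbitrarily small compared to $u^*_i$ (the heavy items may arrive last). The paper resolves this with a stopping time: let $t^*$ be the first item at which the accumulated values reach $\frac{1}{2}\min_i u^*_i$; the items before $t^*$ cost at most a factor $2$ in each $u^*_i$, and from $t^*$ on the anticipated utility $u'_{it^*}$ is at least $\frac{1}{4\lambda^2 N^3}\bigl(\prod_i u^*_i\bigr)^{1/N}$ by the choice of $t^*$ and the $\lambda N$-impartiality from Lemma~\ref{lem:balance-impartial}, so the telescoping only pays an $N\log\mathrm{poly}(\lambda N)$ additive term. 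You need some such device; telescoping from $t=1$ and comparing to $u'_{i,1}$ does not work. By contrast, the bookkeeping you flag as the main obstacle (the global floor term growing between steps) is benign: it only increases the anticipated utilities, and the telescoping uses $\hat u_{i(t-1)} \le u'_{it}$ in the favorable direction, so nothing needs to be "absorbed."
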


Before getting into the proof of Theorem~\ref{thm:half-greedy-lambda}, a comparison with the Set-Aside Greedy algorithm of \citet{BanerjeeGGJ:SODA:2022} is warranted, since readers familiar with the previous algorithm may have noticed the similarity between the two algorithms.
Both algorithms divide each item equally into two halves; both allocate the first half equally, and the second half by some greedy algorithm with anticipated utilities.
The difference lies in the designs of anticipated utilities.
The Set-Aside Greedy algorithm may be viewed as replacing the first part of our anticipated utility in Eqn.~\eqref{eqn:half-and-half-anticipated-utility} by a $\frac{1}{2N}$ fraction of the prediction on agent $i$'s monopolist utility.
Following the idea of Set-Aside Greedy, a natural attempt is to use each agent $i$'s monopolist utility for all known items as the prediction of its final monopolist utility.
In the online setting, however, the items that contribute the most to an agent's monopolist utility may come at the very end.
In that case, the algorithm would underestimate the agent's final utility for a long time, and as a result might unnecessarily allocate many items to this agent in the early rounds.
Our solution is to aggregate the monopolist utilities of \emph{all agents} for the known items into an anticipated utility for \emph{every agent}, an idea driven by the assumption of balanced instances.

We next present the analysis of Algorithm~\ref{alg:half-greedy-lambda}.
It is useful to define the following auxiliary utilities for any agent $1 \le i \le N$ and any item $1 \le t \le T$:
\[
    \hat{u}_{it} ~ = ~ \frac{1}{2 \lambda N^2} \underbrace{\sum_{j=1}^N \sum_{t'=1}^t v_{jt'} s_{t'}}_{\substack{\text{sum of monopolist utilities}\\ \text{for items $1$ to $t$}}} + \underbrace{\vphantom{\sum_{j=1}^N} \sum_{t'=1}^t v_{it'} z_{it'}}_{\substack{\text{$i$'s utility for the second halves}\\ \text{of items $1$ to $t$ allocated to $i$}}}
    ~.
\]

By the underlying logic of the algorithm's greedy allocation of the second halves of the items, this is what the algorithm anticipates agent $i$'s utility to be after allocating item $t$.
The next lemma validates this anticipation at the end of the algorithm.

\begin{lemma}
    \label{lem:half-greedy-lambda-estimation}
    For each agent $i$, $u_i \geq \hat{u}_{iT}$.
\end{lemma}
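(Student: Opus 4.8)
The plan is to show that at the end of the run, agent $i$'s true utility $u_i = \sum_{t=1}^T v_{it} x_{it} = \sum_{t=1}^T v_{it}(y_{it} + z_{it})$ dominates the anticipated utility $\hat u_{iT}$. Recall $\hat u_{iT} = \frac{1}{2\lambda N^2}\sum_{j=1}^N\sum_{t'=1}^T v_{jt'} s_{t'} + \sum_{t'=1}^T v_{it'} z_{it'}$. The second summand is exactly the part of $u_i$ coming from the greedily allocated second halves, so it suffices to show that the first-half contribution $\sum_{t=1}^T v_{it} y_{it} = \sum_{t=1}^T v_{it}\frac{s_t}{2N} = \frac{1}{2N}\sum_{t=1}^T v_{it} s_t$ is at least $\frac{1}{2\lambda N^2}\sum_{j=1}^N\sum_{t'=1}^T v_{jt'} s_{t'}$.

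First I would isolate the two sides: we want
\[
    \frac{1}{2N}\sum_{t=1}^T v_{it} s_t ~\ge~ \frac{1}{2\lambda N^2}\sum_{j=1}^N\sum_{t=1}^T v_{jt} s_t .
\]
Multiplying through by $2N$, this is equivalent to $\sum_{t=1}^T v_{it} s_t \ge \frac{1}{\lambda N}\sum_{j=1}^N\sum_{t=1}^T v_{jt} s_t$. Now I invoke the $\lambda$-balanced assumption (Definition~\ref{def:balanced-instance}): for every agent $j$, $\sum_{t=1}^T v_{jt} s_t \le \lambda \sum_{t=1}^T v_{it} s_t$. Summing this inequality over all $N$ agents $j$ gives $\sum_{j=1}^N\sum_{t=1}^T v_{jt} s_t \le \lambda N \sum_{t=1}^T v_{it} s_t$, which is exactly the needed bound after rearranging. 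Combining, $\sum_{t=1}^T v_{it} y_{it} \ge \frac{1}{2\lambda N^2}\sum_{j=1}^N\sum_{t'=1}^T v_{jt'} s_{t'}$, and adding $\sum_{t'=1}^T v_{it'} z_{it'}$ to both sides yields $u_i \ge \hat u_{iT}$.

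There is essentially no real obstacle here — the lemma is a bookkeeping statement that the "set aside" first halves, distributed equally, suffice to cover the aggregated-monopolist-utility term that the algorithm optimistically credits to each agent; the only thing to be careful about is that the balance assumption is used in the direction that bounds \emph{every} agent's monopolist utility by $\lambda$ times agent $i$'s, so that the sum over $j$ introduces exactly the factor $\lambda N$ that matches the $\frac{1}{2\lambda N^2}$ coefficient against the $\frac{1}{2N}$ coefficient. I would also note in passing that the greedy step's constraint $\sum_i z_{it} \le s_t/2$ together with $y_{it} = s_t/(2N)$ keeps the total allocation of item $t$ within its supply, so $x$ is feasible, though that is not strictly needed for this lemma.
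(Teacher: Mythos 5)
Your proof is correct and follows essentially the same route as the paper's: both reduce the claim to showing that agent $i$'s utility from the equally split first halves, $\frac{1}{2N}\sum_{t=1}^T v_{it}s_t$, covers the term $\frac{1}{2\lambda N^2}\sum_{j=1}^N\sum_{t=1}^T v_{jt}s_t$, via the $\lambda$-balanced inequality $\sum_{j=1}^N\sum_{t=1}^T v_{jt}s_t \le \lambda N \sum_{t=1}^T v_{it}s_t$. Your write-up just spells out the bookkeeping in slightly more detail; no gaps.
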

\begin{proof}
    It suffices to show that agent $i$'s utility for the first halves of the items allocated to it is greater than or equal to the first part of $\hat{u}_{iT}$.
    By definition, agent $i$'s utility for the first halves is:
    \[
        \frac{1}{2N} \sum_{t=1}^T v_{it} s_t
        ~.
    \]

    Further by the assumption of balanced instances, the monopolist utility of any other agent is at most $\lambda$ times larger than agent $i$'s monopolist utility.
    Therefore, we have:
    \[
        \sum_{j=1}^{N}\sum_{t=1}^{T}v_{jt}s_t\leq \lambda N\sum_{t=1}^{T}v_{it}s_t
        ~.
    \]

    Combining the two claims proves the lemma.
\end{proof}

\begin{proof}[Proof of Theorem~\ref{thm:half-greedy-lambda}]
    Since the second halves of the items are allocated by a greedy algorithm with anticipated utilities, we may use Lemma~\ref{lem:greedy-predicted-utility-increase} in the analysis.
    For Lemma~\ref{lem:greedy-predicted-utility-increase} to be effective, however, we need the anticipated utilities to be good approximations of the agents' utilities in the Nash welfare maximizing allocation.
    Fortunately, we only need a polynomial approximation, which is satisfied sufficiently early so that allocating the remaining items correctly still yields approximately optimal Nash welfare.
    
    Let $t^*$ be the earliest item for which $\sum_{i = 1}^{N}\sum_{t = 1}^{t^*} v_{it} \geq \frac{1}{2} \min_{1 \le i \le N} u_i^*$.
    The choice of $t^*$ ensures two properties.
    First, the contribution of items from $t^*$ to $T$ to any agent $i$'s utility in the optimal solution is at least $\frac{1}{2} u_i^*$.
    In other words, even if we had made completely wrong allocations, resulting in zero utilities for all agents, we would have lost at worst half of the Nash welfare.
    Second, the anticipated utility is from now on at least a polynomial approximation of the minimum utility of an agent in the Nash welfare maximizing allocation.
    Further, the instance is at worst $N \lambda$-impartial according to Lemma~\ref{lem:balance-impartial}.
    Hence, this is in fact a polynomial approximation for every agent.

    When an item $t \ge t^*$ arrives, by Lemma \ref{lem:greedy-predicted-utility-increase} the allocation $\vec{z}_t = (z_{it})_{1 \le i \le N}$ of the second half of this item satisfies:
    %
    \begin{equation}
        \label{eqn:half-and-half-analysis-start}
        \sum_{i=1}^{N} \Big( \log \big( u'_{it} + v_{it} z_{it} \big) - \log u'_{it} \Big) \geq \frac{s_t}{2} \max_{1 \le i \le N} \frac{v_{it}}{u'_{it} + v_{it} z_{it}}
        ~.
    \end{equation}

    To relate our inequality to the Nash welfare maximizing allocation $\vec{x^*}$, we apply $\sum_{i=1}^N x^*_{it} \le s_t$ to lower bound the right-hand side of Eqn.~\eqref{eqn:half-and-half-analysis-start} by:
    \[
        \frac{1}{2} \sum_{i=1}^N \frac{v_{it} x^*_{it}}{u'_{it} + v_{it} z_{it}}
        ~.
    \]

    Further, by Lemma~\ref{lem:half-greedy-lambda-estimation} we have $u_i \ge \hat{u}_{iT}$ for any agent $i$, while $\hat{u}_{iT}$ is greater than or equal to $\hat{u}_{it}$ by the definition of $\hat{u}_{it}$'s.
    Putting together, we conclude that:
    \[
        \sum_{i=1}^{N} \Big( \log \big( u'_{it} + v_{it} z_{it} \big) - \log u'_{it} \Big)
        \ge 
        \frac{1}{2} \sum_{i=1}^N \frac{v_{it} x^*_{it}}{u_i}
        ~.
    \]

    Since $\hat{u}_{it} = u'_{it} + v_{it} z_{it}$ and $\hat{u}_{i(t-1)} \le u'_{it}$, we get telescopic cancellations summing over items from $t^*$ to $T$.
    Note that, however, we shall not apply this relaxation for $u'_{it^*}$ for a technical reason that shall be clear shortly.
    On the other hand, the numerator on the right-hand side sums to at least $\frac{u^*_i}{2}$ for all agents $i$ because of the choice of $t^*$.
    Hence:
    \[
        \sum_{i=1}^{N} \Big( \log \hat{u}_{iT} - \log u'_{it^*} \Big)
        \ge
        \frac{1}{4} \sum_{i=1}^N \frac{u^*_i}{u_i}
        ~.
    \]

    Further, by Lemma~\ref{lem:half-greedy-lambda-estimation} we have $u_i \ge \hat{u}_{iT}$.
    We also have:
    \begin{align*}
        u'_{it^*}
        &
        \ge \frac{1}{2 \lambda N^2} \sum_{i = 1}^{N} \sum_{t = 1}^{t^*} v_{it}
        &&
        \mbox{(definition of $u'_{it}$)} \\[1ex]
        &
        \ge \frac{1}{4 \lambda N^2} \min_{1 \le i \le N} u_i^*
        &&
        \mbox{(choice of $t^*$)} \\
        &
        \ge \frac{1}{4 \lambda^2 N^3} \left( \prod_{i=1}^N u_i^* \right)^{\frac{1}{n}}
        ~.
        &&
        \mbox{($\lambda N$-impartiality by Lemma~\ref{lem:balance-impartial})}
    \end{align*}

    Note that this inequality would not be true in general if we had relaxed $u'_{it^*}$ to $\hat{u}_{i(t^*-1)}$.

    Putting together and by AM-GM inequality, we get that:
    \[
        \log \prod_{i=1}^N u_i - \log \prod_{i=1}^N u^*_i + N \log 4 \lambda^2 N^3
        \ge
        \frac{1}{4} \sum_{i=1}^N \frac{u^*_i}{u_i}
        \ge \frac{N}{4} \left( \prod_{i=1}^N \frac{u^*_i}{u_i} \right)^{\frac{1}{N}}
        ~.
    \]

    Let $\Gamma = (\prod_{i=1}^N \frac{u^*_i}{u_i})^{\frac{1}{N}}$ be the ratio of the optimal Nash welfare to the algorithm's Nash welfare.
    The above inequality is equivalent to:
    \[
        - \log \Gamma + \log 4\lambda^2 N^3 \ge \frac{1}{4} \Gamma
        ~.
    \]
    
    Since $\Gamma \ge 1$ and thus $\log \Gamma \ge 0$, the above inequality implies $\Gamma \le 4 \log 4 \lambda^2 N^3 = O(\log \lambda N)$.
\end{proof}

\subsection{Guessing the Balance Ratio}
\label{sec:half-greedy-general}

When we have no prior knowledge of the balance ratio, we can guess an upper bound of the balance ratio by sampling from an appropriate distribution.
Since the final ratio depends logarithmically on the upper bound, it suffices to make a good enough guess that is at most a polynomial of the true balance ratio $\lambda^*$.

Concretely, we shall consider a sequence of numbers starting from $2$, such that each sequel number is the square of the previous number.
We will sample each number $\lambda$ with a probability that is inverse polynomial in $\log\log \lambda$;
this ensures that the correct guess is made with a sufficiently large probability.
Finally, we apply the prior-dependent Half-and-Half algorithm (Algorithm~\ref{alg:half-greedy-lambda}) with the guessed upper bound $\lambda$.
See Algorithm~\ref{alg:half-greedy-general} for a formal definition.

\begin{algorithm}[t]
    \caption{\textbf{Half-and-Half} (for instances with unknown balance ratio)}
    \label{alg:half-greedy-general}
    
    Sample $\lambda$ such that it equals $2^{2^k}$ with probability $\frac{6}{\pi^2} \cdot \frac{1}{(k+1)^2}$ for any non-negative integer $k$.
    
    Run Algorithm~\ref{alg:half-greedy-lambda} with $\lambda$ to allocate the items to the agents.
\end{algorithm}

\begin{theorem}
    \label{thm:half-greedy-general}
    Algorithm \ref{alg:half-greedy-general} is $O \big( \log\lambda^* N \, (\log \log \lambda^*)^2 \big)$-competitive.
\end{theorem}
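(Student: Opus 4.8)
The plan is to lower bound the expected Nash welfare of Algorithm~\ref{alg:half-greedy-general} by the contribution of a single, well-chosen guess, and then invoke Theorem~\ref{thm:half-greedy-lambda}. Let $k^*$ be the smallest non-negative integer with $2^{2^{k^*}} \ge \lambda^*$, and write $\lambda = 2^{2^{k^*}}$ for the corresponding guess. Since the instance is $\lambda^*$-balanced and $\lambda \ge \lambda^*$, it is in particular $\lambda$-balanced; hence, conditioned on the event that Algorithm~\ref{alg:half-greedy-general} samples this value $\lambda$, the run of Algorithm~\ref{alg:half-greedy-lambda} that it performs is $O(\log \lambda N)$-competitive by Theorem~\ref{thm:half-greedy-lambda}, i.e.\ it achieves Nash welfare at least an $\Omega(1 / \log \lambda N)$ fraction of the optimal Nash welfare.

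I would then translate both $\log \lambda N$ and the sampling probability of $\lambda$ into expressions in $\lambda^*$. By the minimality of $k^*$, either $k^* = 0$ (which happens exactly when $\lambda^* \le 2$) or $k^* \ge 1$ and $2^{2^{k^*-1}} < \lambda^*$, in which case $\lambda = (2^{2^{k^*-1}})^2 < (\lambda^*)^2$; in both cases $\log \lambda N = O(\log \lambda^* N)$. Minimality also gives $2^{k^*-1} < \log_2 \lambda^*$, hence $k^* = O(\log\log \lambda^*)$, so $\lambda$ is sampled with probability $\frac{6}{\pi^2 (k^*+1)^2} = \Omega\bigl(1 / (\log\log\lambda^*)^2\bigr)$.

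Finally, because the Nash welfare of any allocation is non-negative (indeed strictly positive, since the first-half allocation already gives every agent a positive utility whenever the optimum is positive), the expected Nash welfare of Algorithm~\ref{alg:half-greedy-general} is at least the probability of sampling $\lambda$ times the Nash welfare it obtains under that guess. Combining the two estimates above, this is $\Omega\bigl( \frac{1}{(\log\log\lambda^*)^2} \cdot \frac{1}{\log\lambda^* N} \bigr)$ times the optimal Nash welfare, and taking the reciprocal yields the claimed $O\bigl(\log\lambda^* N\,(\log\log\lambda^*)^2\bigr)$ competitive ratio. Note that discarding all the other (non-negative) terms of the expectation suffices here, so we never need the trivial $O(N)$ bound for the runs with a bad guess.

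The argument is the standard ``iterated guess-and-double'' reduction, and I expect no real obstacle beyond bookkeeping: the only points demanding a little care are the degenerate regime $\lambda^* \le 2$, where $\log\log\lambda^*$ should be read as a constant and the guess $\lambda = 2$ works directly, and the fact that ``competitive ratio'' is measured against the \emph{expected} Nash welfare, which is precisely what makes lower bounding the expectation by a single term legitimate. The substantive work --- analyzing Half-and-Half for a known upper bound on the balance ratio --- has already been carried out in Theorem~\ref{thm:half-greedy-lambda}.
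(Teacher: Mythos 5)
Your proposal is correct and is essentially the same argument as the paper's: condition on the event that the sampled guess is the smallest $2^{2^k}\ge\lambda^*$, which happens with probability $\Omega\bigl(1/(\log\log\lambda^*)^2\bigr)$, note that this guess is at most $(\lambda^*)^2$ so Theorem~\ref{thm:half-greedy-lambda} gives an $O(\log \lambda^* N)$ guarantee, and discard the non-negative contribution of all other guesses. Your handling of the degenerate case $\lambda^*\le 2$ and the explicit remark about bounding the expectation by a single term are just slightly more careful bookkeeping than the paper's version of the same proof.
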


\begin{proof}
    Suppose that $k$ is the smallest positive integer such that the balance ratio $\lambda^*$ is at most $2^{2^k}$.
    Then, we have $2^{2^{k-1}} < \lambda^* \le 2^{2^k}$ which further implies that $\log \log \lambda^* > k - 1$.
    Hence, the algorithm correctly guesses $\lambda = 2^{2^k}$ with probability at least $\Omega(\frac{1}{k^2}) = \Omega( \frac{1}{(\log\log \lambda^*)^2})$.
    When that happens, the algorithm's Nash welfare is an $O(\log \lambda^* N)$ approximation to the optimal Nash welfare.
    Therefore, even if the algorithm got zero Nash welfare from the other guesses, we would still have the stated competitive ratio.
\end{proof}

 Considering the relation of balance and impartiality ratios (Lemma~\ref{lem:balance-impartial}), Theorem~\ref{thm:half-greedy-general} can directly imply Theorem~\ref{thm:half-greedy-general-mu}, which means the algorithm is also competitive for impartial instances.
The next section will develop algorithms tailored for impartial instances with competitive ratios independent of the number of agents.

\begin{theorem}
\label{thm:half-greedy-general-mu}
    Algorithm \ref{alg:half-greedy-general} is $O \big( \log \mu^* N \, (\log \log \mu^* N)^2 \big)$-competitive.
\end{theorem}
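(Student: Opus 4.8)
The plan is to derive Theorem~\ref{thm:half-greedy-general-mu} as an immediate corollary of Theorem~\ref{thm:half-greedy-general} together with Lemma~\ref{lem:balance-impartial}. The key observation is that Theorem~\ref{thm:half-greedy-general} already shows Algorithm~\ref{alg:half-greedy-general} is $O\big(\log \lambda^* N\,(\log\log\lambda^*)^2\big)$-competitive, where $\lambda^*$ is the balance ratio of the \emph{same} instance, and this guarantee does not depend on any assumption linking $\lambda^*$ to $\mu^*$. So it suffices to substitute the bound $\lambda^* \le \mu^* N$ from Lemma~\ref{lem:balance-impartial} into that competitive ratio.

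First I would invoke Lemma~\ref{lem:balance-impartial} to write $\lambda^* \le \mu^* N$. Then I would substitute this into the competitive ratio from Theorem~\ref{thm:half-greedy-general}: since the bound $O\big(\log \lambda^* N \, (\log\log \lambda^*)^2\big)$ is monotone increasing in $\lambda^*$, replacing $\lambda^*$ by the upper bound $\mu^* N$ only weakens it, giving
\[
    O\big(\log (\mu^* N \cdot N)\,(\log\log (\mu^* N))^2\big)
    = O\big(\log \mu^* N\,(\log\log \mu^* N)^2\big),
\]
where the last step uses $\log(\mu^* N^2) = \Theta(\log \mu^* N)$ (both equal $\Theta(\log\mu^* + \log N)$) and likewise $\log\log(\mu^* N^2) = \Theta(\log\log(\mu^* N))$. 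This is exactly the claimed bound.

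There is essentially no obstacle here — the only thing to be slightly careful about is that the $\log\log$ factor in Theorem~\ref{thm:half-greedy-general} is stated in terms of $\log\log\lambda^*$ rather than $\log\log\lambda^* N$; but since $\log\log\lambda^* \le \log\log(\mu^* N) \le \log\log(\mu^* N)$, absorbing the substitution into the big-$O$ is harmless, and one could equally well note that Algorithm~\ref{alg:half-greedy-general}'s analysis in fact bounds the loss in terms of $\log\log$ of whatever $\lambda$ it needs to guess, which is at most $2^{2^k}$ for the smallest $k$ with $2^{2^k} \ge \lambda^*$, hence $k = O(\log\log\lambda^*) = O(\log\log(\mu^* N))$. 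So the proof is a two-line deduction; the ``work'' has already been done in Theorem~\ref{thm:half-greedy-general} and Lemma~\ref{lem:balance-impartial}.
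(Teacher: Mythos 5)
Your proposal is correct and matches the paper exactly: the paper gives no separate proof of Theorem~\ref{thm:half-greedy-general-mu}, stating only that it follows directly from Theorem~\ref{thm:half-greedy-general} via Lemma~\ref{lem:balance-impartial}, which is precisely your substitution $\lambda^* \le \mu^* N$ together with the monotonicity of the bound and the observation that $\log(\mu^* N^2) = \Theta(\log \mu^* N)$.
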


\section{Impartial Instances}
\label{sec:impartial}

This section considers impartial instances.
Section \ref{sec:greedy-binary} examines a special case of binary values: an agent $i$'s value for an item $t$ is either $v_{it} = 0$, or some value $v_{it} = v_t$ that is the same for all agents.
We will show that a simple greedy algorithm, which was referred to as Myopic Greedy by \citet{BanerjeeGGJ:SODA:2022}, is $O(\log \mu^*$)-competitive in this special case.
This is nearly tight as the next section will show an almost matching lower bound.
Section \ref{sec:greedy-reduction-bounded} then explains how to reduce the general case to the binary case, under an additional assumption that we are given an upper bound $\mu$ of the impartiality ratio $\mu^*$, i.e., if we know that the instance is $\mu$-impartial.
Finally, Section \ref{sec:greedy-reduction-general} applies the same technique as in the previous section to randomly guess an upper bound, removing the additional assumption while losing a factor that depends on $\log \log \mu^*$ in the competitive ratio.

\subsection{Myopic Greedy and Binary Values}
\label{sec:greedy-binary}

The Myopic Greedy algorithm simply allocates each item $t$ greedily to maximize the Nash welfare conditioned on the allocation before item $t$.
In other words, it is a greedy algorithm with anticipated utilities, for which the anticipated utility of an agent equals its utility for the items allocated to it so far.
See Algorithm~\ref{alg:greedy-binary} for a formal definition.

~

\begin{algorithm}[H] 
    \caption{\textbf{Myopic Greedy}}
    \label{alg:greedy-binary}
    \For{\text{\rm each item $1 \le t \le T$}}
    {
        Let the allocation $\vec{x}_t = (x_{it})_{1 \le i \le N}$ maximize:
        \[
            \sum_{i=1}^N \log \Big( \sum_{t'=1}^{t-1} v_{it'} x_{it'} + v_{it} x_{it} \Big)
        \]
        subject to $\sum_{i=1}^N x_{it} \le s_t$ and $x_{it} \ge 0$ for all agents $1 \le i \le N$.
    }
\end{algorithm}

~

The rest of this subsection will assume that the agents' values are binary, that is, for any item $1 \le t \le T$ and any agent $1 \le i \le N$, either $v_{it} = 0$ or $v_{it} = v_t$.

\begin{theorem}
    \label{thm:greedy-binary}
    Algorithm~\ref{alg:greedy-binary} is $O(\log\mu^*)$-competitive if the agents have binary values for the items.
\end{theorem}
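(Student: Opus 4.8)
The plan is to track how the log-Nash welfare of Myopic Greedy grows as items arrive, and to compare it against the optimal allocation in the spirit of Lemma~\ref{lem:greedy-predicted-utility-increase}. For binary values the greedy allocation of item $t$ with supply $s_t$ and value $v_t$ is a classical water-filling: it distributes the $s_t v_t$ units of value among the \emph{eligible} agents (those with $v_{it}=v_t$) so as to equalize utilities as much as possible, always pouring into the currently-poorest eligible agents. Let $U_t=\sum_i \log u_{it}$ be the log-Nash welfare after item $t$, where $u_{it}$ is agent $i$'s utility after greedy has processed items $1,\dots,t$. Applying Lemma~\ref{lem:greedy-predicted-utility-increase} with $u'_{it}=u_{i(t-1)}$ gives
\[
U_t - U_{t-1} \;\ge\; s_t \cdot \max_{i} \frac{v_{it}}{u_{it}} \;=\; v_t s_t \cdot \max_{i \,:\, v_{it}=v_t} \frac{1}{u_{it}}.
\]
The first main step is to lower bound the right-hand side by something comparable to $\sum_i \frac{v_{it} x^*_{it}}{\text{(something)}}$, using $\sum_i x^*_{it}\le s_t$ together with Lemma~\ref{lem:maximum-mu-value}: in the optimal allocation only agents with $v_{it}\ge \frac{1}{\mu^*}\max_j v_{jt}$ receive item $t$, and for binary values $\max_j v_{jt}=v_t$, so only eligible agents ($v_{it}=v_t$) get positive $x^*_{it}$. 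Hence $\sum_i v_{it} x^*_{it} = v_t \sum_{i\text{ eligible}} x^*_{it} \le v_t s_t$, and the greedy gain is at least $\sum_i \frac{v_{it} x^*_{it}}{\max_{i\text{ eligible}} u_{it}}$, which we want to relate to $\sum_i \frac{v_{it} x^*_{it}}{u_i}$ where $u_i=u_{iT}$ is the final greedy utility.

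The crux — and the step I expect to be the main obstacle — is controlling $\max_{i\text{ eligible}} u_{it}$ (the largest current greedy utility among agents eligible for item $t$) against the \emph{final} greedy utility $u_i$ of the agents who receive item $t$ optimally. This is where impartiality must enter on the greedy side, not just the optimal side: I expect to argue that Myopic Greedy's own allocation stays roughly impartial — more precisely, that because greedy always pours into the poorest eligible agent, at the moment item $t$ is allocated the poorest eligible agent's utility $\min_{i\text{ eligible}} u_{it}$ cannot be far below the utilities of the agents to whom it matters. A cleaner route: sum the displayed inequality only over the ``late'' items $t\ge t^*$, where $t^*$ is the earliest index with $\sum_{t\le t^*} v_t s_t \ge \frac{1}{2}\min_i u_i^*$, exactly as in the proof of Theorem~\ref{thm:half-greedy-lambda}. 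For $t<t^*$ the total value is small, so the greedy utilities at time $t^*$ are already within a $\mathrm{poly}(\mu^* )$ factor of $\min_i u_i^*$ (using Lemma~\ref{lem:nsw-proportionality} and $\mu^*$-impartiality to compare $\min_i u_i^*$ with the geometric mean); this gives the additive $O(N\log \mu^*)$ slack needed to start the telescoping, and for $t\ge t^*$ we may safely bound $\max_{i\text{ eligible}} u_{it}\le u_i$ (utilities only grow) after replacing the denominator by the final utility — the subtlety being that we keep the $u'_{it^*}$ term un-relaxed, again mirroring the balanced-case argument.

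Telescoping $U_t-U_{t-1}$ from $t^*$ to $T$ then yields
\[
\log \prod_i u_i - \log \prod_i u_i^* + O(N\log \mu^*) \;\ge\; \frac{1}{4}\sum_{i=1}^N \frac{u_i^*}{u_i} \;\ge\; \frac{N}{4}\Big(\prod_{i=1}^N \frac{u_i^*}{u_i}\Big)^{1/N},
\]
by AM--GM, where the $\frac14$ and the shift by $t^*$ account for the at-most-half of each $u_i^*$ lost on early items. Writing $\Gamma=(\prod_i u_i^*/u_i)^{1/N}$ for the competitive ratio, this reads $-\log\Gamma + O(\log\mu^*) \ge \frac{1}{4}\Gamma$, and since $\Gamma\ge 1$ we conclude $\Gamma = O(\log \mu^*)$, as claimed. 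The one genuinely binary-specific ingredient is that $\max_j v_{jt}=v_t$ collapses ``$v_{it}$ within a $\mu^*$ factor of the max'' into the clean statement ``$v_{it}=v_t$ exactly,'' which is what lets the greedy water-filling and the optimal allocation be compared agent-by-agent without an extra $\mu^*$ loss per item; in the general (non-binary) case this is precisely what forces the reduction of Section~\ref{sec:greedy-reduction-bounded} and the extra logarithmic factor.
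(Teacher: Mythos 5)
Your per-item accounting up to the telescoping is fine (for Myopic Greedy the denominator relaxation to the final utility $u_i$ is even easier than in the balanced case, since $u'_{it}+v_{it}x_{it}=u_{it}\le u_i$), but the step you yourself flag as the crux is where the argument genuinely breaks: the claim that at the global time $t^*$ every agent's greedy utility is already within a $\mathrm{poly}(\mu^*)$ factor of $\min_i u_i^*$ is false. In the proof of Theorem~\ref{thm:half-greedy-lambda} this anchor is supplied by the \emph{algorithm's design}, not by any property of greedy: the anticipated utility \eqref{eqn:half-and-half-anticipated-utility} contains the aggregate term $\frac{1}{2\lambda N^2}\sum_j\sum_{t'\le t} v_{jt'}s_{t'}$, which is simultaneously (i) a valid lower bound on each agent's final utility thanks to the equally-split halves (Lemma~\ref{lem:half-greedy-lambda-estimation}) and (ii) at least $\frac{1}{4\lambda N^2}\min_i u_i^*$ at time $t^*$ for \emph{every} agent. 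Myopic Greedy has no analogue: its anticipated utility is the agent's own current utility, which can be exactly $0$ at time $t^*$ even on a $1$-impartial instance. Concretely, take $N$ agents and items arriving in blocks, where in the $i$-th block only agent $i$ has positive (binary) value and each block has total value $V$: the instance is $1$-impartial, $t^*$ lies in the first block, and agents $2,\dots,N$ have utility $0$ there. Then your telescoped left-hand side $\sum_i\bigl(\log u_{iT}-\log u_{i(t^*-1)}\bigr)$ is $+\infty$, the inequality $-\log\Gamma+O(\log\mu^*)\ge\frac14\Gamma$ does not follow, and no bound on $\Gamma$ is obtained. Neither Lemma~\ref{lem:nsw-proportionality} nor impartiality can rescue this, since both speak about the optimal allocation, not about greedy's interim utilities; this anchoring problem is exactly the difficulty of analyzing Myopic Greedy without set-aside mass or predictions.

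For comparison, the paper's proof does not use Lemma~\ref{lem:greedy-predicted-utility-increase} or any potential/telescoping argument at all. It exploits the water-filling structure of binary greedy combinatorially: sorting agents by final greedy utility, it shows the prefix-sum inequalities $\sum_{j\le i}\tilde u_j\le\sum_{j=1}^N\min\{u_i,u_j\}$ (Lemma~\ref{lem:binary-inquality}), identifies the extremal utility profile subject to these constraints (Lemma~\ref{lem:binary-utility}), and then applies a new analytic inequality comparing $\frac1N\sum_i\log a_i$ with the logs of prefix averages for $a_i\in[1,\mu]$ (Lemma~\ref{lem:binary-upper-bound}), which is where the $O(\log\mu^*)$ (in fact $O(\log\log\mu^*+\log\mu^*)$-free, i.e.\ $\log(\log\mu+1)+O(1)$ per agent) loss comes from; it also yields the stronger benchmark-comparison form (Lemma~\ref{lem:binary-greedy-ratio}) needed later for the reduction from general values. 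If you want to salvage a telescoping-style proof, you would need to first establish some aggregate control on how unevenly greedy's interim utilities can lag behind the benchmark --- which is essentially what Lemmas~\ref{lem:binary-inquality}--\ref{lem:binary-upper-bound} do in a different language --- so as written the proposal has a genuine gap at its central step.
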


In fact, we will prove a slightly stronger result so that in the next subsection we can reduce the general case to the case of binary values.
We formulate the stronger claim as the next lemma.

\begin{lemma}
    \label{lem:binary-greedy-ratio}
    For any feasible allocation $\vec{\tilde{x}} = (\tilde{x}_{it})_{1 \le i \le N, 1 \le t \le T}$ and the agents' corresponding utilities $\vec{\tilde{u}} = (\tilde{u}_i)_{1 \le i \le N}$,
    if allocation $\vec{\tilde{x}}$ is $\tilde{\mu}$-impartial, then the Nash welfare of the allocation by Algorithm~\ref{alg:greedy-binary} is at least an $O(\log \tilde{\mu})$ approximation to the Nash welfare of allocation $\vec{\tilde{x}}$, i.e.:
    \[
        \left( \frac{\prod_{i=1}^N \tilde{u}_i}{\prod_{i=1}^N u_i} \right)^{\frac{1}{N}} = ~ O(\log \tilde{\mu})
        ~.
    \]
\end{lemma}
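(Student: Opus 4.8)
The plan is to track the quantity $\Phi_t = \sum_{i=1}^N \log u_i^{(t)}$, where $u_i^{(t)} = \sum_{t'=1}^t v_{it'} x_{it'}$ is agent $i$'s utility for the first $t$ items under Myopic Greedy, and to lower bound the total gain $\Phi_T - \Phi_0$ (interpreting $\log 0 = -\infty$ carefully, or starting the sum once every agent has nonzero utility) against $\sum_i \log \tilde u_i$. Since values are binary, when item $t$ arrives with common value $v_t$ among the agents who want it, the greedy step allocates the whole supply $s_t$ to the subset $S_t = \{i : v_{it} = v_t\}$ so as to maximize $\sum_{i \in S_t} \log(u_i^{(t-1)} + v_t x_{it})$ subject to $\sum_{i \in S_t} x_{it} = s_t$. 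As in the water-filling structure captured by Eqn.~\eqref{eqn:greedy-predicted-utility-optimal} and Lemma~\ref{lem:greedy-predicted-utility-increase} (applied with $u'_{it} = u_i^{(t-1)}$), this gives
\[
    \Phi_t - \Phi_{t-1} \;\ge\; s_t \cdot \max_{i \in S_t} \frac{v_{it}}{u_i^{(t)}}
    \;\ge\; \sum_{i \in S_t} \frac{v_{it}\,\tilde x_{it}}{u_i^{(t)}}
    \;=\; \sum_{i=1}^N \frac{v_{it}\,\tilde x_{it}}{u_i^{(t)}},
\]
where the second inequality uses $\sum_{i \in S_t}\tilde x_{it} \le s_t$ and the fact that $\tilde x_{it} > 0$ only for $i \in S_t$ (since $v_{it} = 0$ otherwise), and the maximum over $S_t$ dominates each term. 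Summing over $t$ and using monotonicity $u_i^{(t)} \le u_i := u_i^{(T)}$ in the denominator yields
\[
    \Phi_T - \Phi_0 \;\ge\; \sum_{i=1}^N \frac{1}{u_i}\sum_{t=1}^T v_{it}\,\tilde x_{it} \;=\; \sum_{i=1}^N \frac{\tilde u_i}{u_i}.
\]

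The second step is to control the ``startup'' loss, i.e.\ the difference between $\Phi_T - \Phi_0$ and $\sum_i \log u_i - \sum_i \log \tilde u_i$. The key leverage is $\tilde\mu$-impartiality of $\vec{\tilde x}$: all $\tilde u_i$ lie within a factor $\tilde\mu$ of one another, so by an AM–GM / proportionality-type argument (and the analogue of Lemma~\ref{lem:nsw-proportionality}, i.e.\ the fact that greedy is proportional in the binary case, or simply that by the time enough supply has arrived every agent in $\bigcup_t S_t$ has utility at least a polynomial-in-$(\tilde\mu,N)$ fraction of $\min_i \tilde u_i$), one shows that from some early item $t^\circ$ onward $u_i^{(t)}$ is within a $\mathrm{poly}(\tilde\mu)$ factor of the geometric mean $(\prod_j \tilde u_j)^{1/N}$. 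Choosing $t^\circ$ so that items $t^\circ,\dots,T$ still carry at least half of each $\tilde u_i$ (as in the proof of Theorem~\ref{thm:half-greedy-lambda}), we pay only an additive $O(N \log \tilde\mu)$ (plus $O(N\log N)$, absorbed) in the telescoping and lose at most a factor $2$ in the right-hand side. This converts the display above into
\[
    \log\frac{\prod_i u_i}{\prod_i \tilde u_i} + O(N \log \tilde\mu) \;\ge\; \tfrac{1}{4}\sum_{i=1}^N \frac{\tilde u_i}{u_i} \;\ge\; \tfrac{N}{4}\,\Gamma,
\]
where $\Gamma = (\prod_i \tilde u_i/\prod_i u_i)^{1/N}$, by AM–GM.

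Finally, setting $\Gamma$ as above, the inequality reads $-\log\Gamma + O(\log\tilde\mu) \ge \tfrac14 \Gamma$, and since $\Gamma \ge 1$ forces $\log\Gamma \ge 0$, we conclude $\Gamma = O(\log\tilde\mu)$, which is exactly the claimed bound; Theorem~\ref{thm:greedy-binary} then follows by taking $\vec{\tilde x} = \vec{x^*}$ and $\tilde\mu = \mu^*$. The main obstacle I anticipate is the startup step: unlike the Half-and-Half algorithm, Myopic Greedy sets aside nothing, so one must argue that greedy alone drives every relevant agent's utility up to a $\mathrm{poly}(\tilde\mu)$ fraction of the target geometric mean quickly enough — in particular handling agents who are in $S_t$ for very few items, and ensuring the choice of $t^\circ$ (defined relative to $\vec{\tilde x}$, not the optimal allocation) still leaves enough value in the tail. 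Establishing this ``self-proportionality'' of greedy under impartiality, cleanly enough to get only a single logarithmic factor, is where the real work lies.
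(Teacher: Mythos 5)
There is a genuine gap, and it is exactly the step you flag at the end: the ``startup'' claim is not established, and in the single-threshold form you propose it cannot be instantiated in general. You need one item index $t^\circ$ such that (i) the tail $t^\circ,\dots,T$ still carries at least half of every $\tilde{u}_i$, and (ii) at time $t^\circ$ every agent's greedy utility $u_i^{(t^\circ-1)}$ is already within a $\mathrm{poly}(\tilde\mu,N)$ factor of the benchmark scale. These two requirements can be incompatible: if some agent's benchmark utility $\tilde{u}_j$ comes entirely from the very first item, (i) forces $t^\circ=1$, while another agent may value no early items at all, so $u_i^{(t^\circ-1)}=0$ and your telescoping anchor $\Phi_{t^\circ-1}$ is $-\infty$. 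This is precisely the difficulty that Half-and-Half is engineered to avoid: in the proof of Theorem~\ref{thm:half-greedy-lambda}, the anchor $u'_{it^*}$ is bounded below not by the agent's own accumulated utility but by the aggregate term $\frac{1}{2\lambda N^2}\sum_j\sum_{t'\le t^*} v_{jt'}s_{t'}$ in Eqn.~\eqref{eqn:half-and-half-anticipated-utility}, which Myopic Greedy does not have. Moreover, even if you did establish a $\mathrm{poly}(\tilde\mu,N)$ startup bound, the additive $O(N\log N)$ term is not ``absorbed'': it translates into $\Gamma=O(\log(\tilde\mu N))$, whereas the whole point of Lemma~\ref{lem:binary-greedy-ratio} is an $N$-independent $O(\log\tilde\mu)$ bound (this independence is what later yields the $O(\log^2\mu)$ ratio of Theorem~\ref{thm:greedy-reduction-mu}). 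An $N$-free startup bound is false in general (one early item valued by all agents gives each only a $1/N$ share), so the potential/telescoping route seems to inherently lose a $\log N$.

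For comparison, the paper proves the lemma by a different, combinatorial route: sorting agents by their greedy utilities and showing the prefix inequality $\sum_{j\le i}\tilde{u}_j\le\sum_{j}\min\{u_i,u_j\}$ (Lemma~\ref{lem:binary-inquality}, using that with binary values greedy always feeds the currently poorest interested agent), then an extremal argument showing the worst case is when these constraints are tight (Lemma~\ref{lem:binary-utility}), and finally a new analytic inequality comparing a sequence in $[1,\mu]$ to its prefix averages (Lemma~\ref{lem:binary-upper-bound}), with Stirling absorbing the $N$-dependent factor into $O(1)$ per agent. That last inequality is what delivers the $N$-independent $O(\log\tilde\mu)$; your proposal, as written, does not reach it.
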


Theorem~\ref{thm:greedy-binary} follows as a corollary by letting $\vec{\tilde{x}}$ be the Nash welfare maximizing allocation $\vec{x^*}$.

The rest of the subsection focuses on proving Lemma~\ref{lem:binary-greedy-ratio}.
We assume without loss of generality that the agents are sorted by their utilities for the algorithm's allocation, i.e.:
\[
    u_1 \le u_2 \le \dots \le u_N
    ~.
\]

We start with the following lemma which links the agents' utilities $\vec{u}$ for the algorithm's allocation, and their utilities $\vec{\tilde{u}}$ for the benchmark allocation $\vec{\tilde{x}}$.
\begin{lemma}
    \label{lem:binary-inquality}
    For any $1 \le i \le N$:
    \begin{equation}
        \label{eqn:binary-inquality}
        \sum_{j=1}^{i}\tilde{u}_j\leq\sum_{j=1}^{N}\min\{u_i,u_j\}.
    \end{equation}
\end{lemma}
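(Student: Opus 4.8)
The plan is to fix the item ordering coming from Myopic Greedy and argue about the ``profile'' of utilities the greedy process builds up, round by round. The key observation about binary values is that, within a single round $t$, the greedy allocation of item $t$ can be described by a single ``water level'': there is a threshold $\theta_t$ such that every agent with $v_{it} = v_t$ whose current accumulated utility is below $\theta_t$ gets topped up toward $\theta_t$ (subject to supply running out), while agents already above $\theta_t$, or with $v_{it}=0$, receive nothing. Equivalently, if we think of each agent's running utility $u_i^{(t)} = \sum_{t' \le t} v_{it'} x_{it'}$, then in each round the greedy step raises a down-set of the ``eligible'' agents to a common level, spending at most $s_t v_t$ total. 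I would first record this structural fact, since it is the engine behind the inequality.

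With that in hand, I would prove Lemma~\ref{lem:binary-inquality} by a charging / exchange argument comparing the benchmark allocation $\vec{\tilde x}$ to the greedy allocation. Sort agents so $u_1 \le \dots \le u_N$ as in the statement. The right-hand side $\sum_{j=1}^N \min\{u_i, u_j\} = \sum_{j \le i} u_j + (N-i)\,u_i$ is exactly the total utility mass the greedy algorithm would have accumulated if we ``cap'' every agent's greedy utility at $u_i$. So it suffices to show that for the benchmark, the $i$ agents with the largest benchmark utilities (or any $i$ agents — but taking the top $i$ is the binding case, and relabeling handles it) have total benchmark utility at most this capped greedy mass. The natural route: for each item $t$ and each benchmark-receiving agent, charge the amount $v_{it}\tilde x_{it}$ to the greedy process. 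Since values are binary and $\sum_j \tilde x_{jt} \le s_t$, the benchmark spends at most $s_t v_t$ ``worth'' on item $t$ among agents who value it — the same budget per round the greedy step has. The point is that greedy, by always pouring into the currently-lowest eligible agents up to a common level, is the ``most balanced'' way to spend each round's budget, so after all rounds, for any $i$, the sum of the $i$ smallest greedy utilities dominates the sum of any $i$ benchmark utilities that were built from the same per-round budgets — formally, this is a majorization-type statement that I would prove by induction on $t$, maintaining the invariant that the sorted prefix sums of the greedy profile dominate those of (a suitably coupled bound on) the benchmark profile, capped appropriately.

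I expect the main obstacle to be making the majorization/coupling argument airtight across rounds when supply runs out mid-round: when $s_t v_t$ is too small to bring all eligible low agents up to a common level, greedy fills a strict down-set, and one has to check the prefix-sum domination is preserved both for prefixes inside and outside that down-set, and that the benchmark cannot ``cheat'' by concentrating its round-$t$ budget on a single agent in a way that breaks a small prefix sum — here the cap at $u_i$ on the greedy side and the sorting are what save it, because a benchmark agent pushed very high is still only counted up to $\min\{u_i,u_j\}$ when $i$ is small. Once Lemma~\ref{lem:binary-inquality} is established, the remainder of the proof of Lemma~\ref{lem:binary-greedy-ratio} should be a relatively routine computation: use $\tilde\mu$-impartiality to say $\tilde u_j \ge \tilde u_N/\tilde\mu$ for all $j$, combine with Eqn.~\eqref{eqn:binary-inquality} and the sortedness $u_1 \le \dots \le u_N$ to get, for each $i$, a lower bound on $u_i$ in terms of the benchmark's geometric mean divided by a $\log\tilde\mu$-type factor (the $\sum_j \min\{u_i,u_j\}$ shape is what produces the harmonic-sum, hence logarithmic, loss), and then take the product over $i$ and apply AM--GM to conclude $\big(\prod_i \tilde u_i / \prod_i u_i\big)^{1/N} = O(\log\tilde\mu)$.
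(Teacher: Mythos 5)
There is a genuine gap: your reduction sends you after a false statement. You claim it suffices to show that the $i$ agents with the \emph{largest benchmark} utilities have total benchmark utility at most the capped greedy mass $\sum_{j=1}^N \min\{u_i,u_j\}$ (``taking the top $i$ is the binding case, and relabeling handles it''), and your proposed invariant is of the same shape (``the sum of the $i$ smallest greedy utilities dominates the sum of any $i$ benchmark utilities''). Neither holds. Take $N=3$ agents and two items arriving in this order: item $1$ with supply $10$, valued $1$ by agent $3$ only; item $2$ with supply $1$, valued $1$ by everyone. Myopic Greedy gives all of item $1$ to agent $3$ and then, by the water-level rule you correctly describe, splits item $2$ between agents $1$ and $2$, so the sorted greedy utilities are $u_1=u_2=\tfrac12$, $u_3=10$. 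A benchmark that gives item $1$ to agent $3$ and all of item $2$ to agent $1$ has $\tilde u=(1,0,10)$. For $i=1$ the capped greedy mass is $\sum_j\min\{u_1,u_j\}=\tfrac32$, while the largest benchmark utility is $10$ (and can be made arbitrarily larger by increasing item $1$'s supply). The lemma itself survives because its left-hand side is $\tilde u$ summed over the $i$ agents with the \emph{smallest greedy} utilities, which here is at most $1$; the indexing by the greedy order is essential and cannot be traded for ``top-$i$ benchmark'' or ``any $i$ agents.'' Consequently the prefix-sum/majorization coupling you plan to maintain across rounds is false already after these two rounds, independently of the supply-runs-out subtlety you flag as the main obstacle.

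The paper's proof keeps the greedy-poorest agents in the picture throughout. Fix $i$ and let $S_i$ be the set of items valued (positively) by at least one of agents $1,\dots,i$. Since those agents can only earn from $S_i$ and values are binary, $\sum_{j\le i}\tilde u_j\le\sum_{t\in S_i}s_tv_t$. Then the water-filling structure (your first observation, via Eqn.~\eqref{eqn:greedy-predicted-utility-optimal}) is applied \emph{per item of $S_i$}: any agent $j$ receiving a positive amount of $t\in S_i$ has, right after that allocation, the smallest running utility among agents valuing $t$; since some agent among $1,\dots,i$ values $t$ and ends with utility at most $u_i$, agent $j$'s running utility at that moment is at most $u_i$. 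Hence the $S_i$-items contribute at most $\min\{u_i,u_j\}$ to each greedy utility $u_j$, and summing over $j$ absorbs the full supply value $\sum_{t\in S_i}s_tv_t$, giving Eqn.~\eqref{eqn:binary-inquality}. So the structural engine you identified is the right one, but it must be localized to the items valued by the $i$ greedy-poorest agents rather than used for a global greedy-versus-benchmark domination, which is simply not true.
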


\begin{proof}
    For any fixed $1 \le i \le N$, let $S_i$ be the set of items for which at least one of agents $1$ to $i$ have positive values, i.e.:
    \[
        S_i = \Big\{ 1 \le t \le T : \exists 1 \le j \le i, v_{jt} = v_t \Big\}
        ~.
    \]

    Then, the left-hand side of Equation~\eqref{eqn:binary-inquality} is upper bounded by:
    \[
        \sum_{j=1}^i \tilde{u}_j \le \sum_{t \in S_i} s_t v_t
        ~.
    \]

    To prove the lemma, it remains to show that the items in $S_i$ contribute at most $\min \{ u_i, u_j \}$ to any agent $j$'s utility for the algorithm's allocation.
    For agents $1 \le j \le i$, this is trivially true because the stated utility bound is simply $u_j$.
    For any agent $j > i$, consider the last moment when it receives a positive amount of some item $t$ in $S_i$.
    It suffices to show that its utility is at most $u_i$ after the allocation of item $t$.
    Recall that the Myopic Greedy algorithm is a greedy algorithm with anticipated utilities.
    First by the assumption of binary values and by Equation~\eqref{eqn:greedy-predicted-utility-optimal}, we conclude that any agent who receives a positive amount of item $t$ must have the smallest utility among all agents with values $v_t$ for item $t$, after the allocation of item $t$.
    Meanwhile, one of the agents from $1$ to $i$ has value $v_t$ for the item since $t \in S_i$, and this agent has utility at most $u_i$.
    Therefore, agent $j$'s utility after the allocation of item $t$ is at most $u_i$ as desired.
\end{proof}

If we view the agents' utilities $\vec{u}$ for the algorithm's allocation as variables, Lemma~\ref{lem:binary-inquality} offers a set of linear inequalities that relates them with the benchmark utilities $\vec{\tilde{u}}$.
The next lemma shows that subject to these inequalities, the smallest Nash welfare w.r.t.\ $\vec{u}$ is achieved when the inequalities all hold with equalities.

\begin{lemma}
    \label{lem:binary-utility}
    For any non-negative $\vec{u} = (u_i)_{1 \le i \le N}$ and $\vec{\tilde{u}} = (\tilde{u}_i)_{1 \le i \le N}$ that satisfy the inequality in Equation~\eqref{eqn:binary-inquality}, we have:
    \begin{equation}
        \label{eqn:binary-utility}
        \sum_{i=1}^{N} \log u_i ~ \geq ~ \sum_{i = 1}^{N} \log\left( \sum_{j = 1}^{i}\frac{\tilde{u}_i}{N+1-j} \right)
        ~.
    \end{equation}
\end{lemma}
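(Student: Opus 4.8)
The plan is to locate the point of the polytope cut out by the inequalities in \eqref{eqn:binary-inquality} at which $\prod_i u_i$ is smallest, check that it is exactly the vector appearing on the right-hand side of \eqref{eqn:binary-utility}, and then transfer the bound to an arbitrary feasible $\vec{u}$ by a convexity argument. As throughout this subsection, assume $u_1 \le u_2 \le \dots \le u_N$. Then $\min\{u_i, u_j\} = u_j$ for $j \le i$ and $\min\{u_i, u_j\} = u_i$ for $j > i$, so \eqref{eqn:binary-inquality} reads, for every $1 \le i \le N$,
\[
    \sum_{j=1}^{i} u_j + (N - i)\, u_i ~\ge~ P_i \qquad\text{where}\qquad P_i := \sum_{j=1}^{i} \tilde{u}_j .
\]
In terms of the prefix sums $U_i := \sum_{j=1}^{i} u_j$ (with $U_0 := 0$) this says $U_i \ge \tfrac{1}{N+1-i}\big(P_i + (N-i)\,U_{i-1}\big)$. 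First I would make this an equality throughout: set $\hat U_0 := 0$, $\hat U_i := \tfrac{1}{N+1-i}\big(P_i + (N-i)\,\hat U_{i-1}\big)$, and $\hat u_i := \hat U_i - \hat U_{i-1}$. A routine induction on $i$ (or a direct substitution into the recursion) gives $\hat u_i = \sum_{j=1}^{i} \frac{\tilde{u}_j}{N+1-j}$, which is exactly the quantity inside the $i$-th logarithm in \eqref{eqn:binary-utility}. So it suffices to prove $\sum_{i=1}^{N} \log u_i \ge \sum_{i=1}^{N} \log \hat u_i$. (If some $\hat u_i = 0$, which happens only when $\tilde{u}_1 = \dots = \tilde{u}_i = 0$, the right-hand side of \eqref{eqn:binary-utility} is $-\infty$ and there is nothing to prove, so assume every $\hat u_i > 0$.)

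Next I would show that the prefix sums of $\vec{u}$ dominate those of the all-tight solution, i.e.\ $U_i \ge \hat U_i$ for every $i$. This is immediate by induction: $U_1 = u_1 \ge P_1/N = \hat U_1$; and if $U_{i-1} \ge \hat U_{i-1}$, then because $t \mapsto \tfrac{1}{N+1-i}\big(P_i + (N-i)\,t\big)$ is increasing, $U_i \ge \tfrac{1}{N+1-i}\big(P_i + (N-i)\,U_{i-1}\big) \ge \tfrac{1}{N+1-i}\big(P_i + (N-i)\,\hat U_{i-1}\big) = \hat U_i$. Equivalently, the quantities $C_k := \sum_{i=1}^{k} (u_i - \hat u_i) = U_k - \hat U_k$ are all non-negative.

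The last step converts this prefix-sum dominance into the product inequality. By concavity of $\log$, the tangent-line bound at $u_i$ gives $\log u_i - \log \hat u_i \ge (u_i - \hat u_i)/u_i$, hence $\sum_i \log u_i - \sum_i \log \hat u_i \ge \sum_i (u_i - \hat u_i)/u_i$. Summation by parts rewrites the last sum as $\sum_{k=1}^{N-1} C_k \big(\tfrac{1}{u_k} - \tfrac{1}{u_{k+1}}\big) + C_N \tfrac{1}{u_N}$, and this is $\ge 0$ because each $C_k \ge 0$ and, $\vec{u}$ being non-decreasing, the weights $1/u_k$ are non-increasing and positive. Therefore $\sum_i \log u_i \ge \sum_i \log \hat u_i$, which is \eqref{eqn:binary-utility}.

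I expect the only genuinely non-obvious point to be the choice of the right object to compare: the individual utilities $u_i$ need not dominate the $\hat u_i$, but their prefix sums do, and the sortedness of $\vec{u}$ is precisely what lets this be pushed through the logarithm by summation by parts. A more laborious alternative would be to argue that $\prod_i u_i$ attains its minimum over the feasible polytope at a vertex, observe that such a vertex is affine in the prefix sums on each block between consecutive indices where \eqref{eqn:binary-inquality} is tight, and apply AM--GM on each block; the convexity-plus-summation-by-parts argument above avoids that case analysis.
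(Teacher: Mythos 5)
Your proof is correct, but it takes a genuinely different route from the paper's. The paper treats the inequality as an optimization problem: it views the constraints (together with the sorting $u_1\le\cdots\le u_N$) as a polytope, argues that the concave objective $\sum_i\log u_i$ is minimized at a vertex, and uses a local perturbation (decrease $\bar u_{i^*}$ slightly, spread the slack among later coordinates) to rule out any vertex where some constraint is slack, concluding that the minimizer is exactly the all-tight solution $u_i=\sum_{j\le i}\tilde u_j/(N+1-j)$. You instead bypass the optimization argument entirely: you construct the all-tight solution $\hat u$ explicitly, verify by the prefix-sum recursion that it coincides with the right-hand side of \eqref{eqn:binary-utility}, prove by induction that the prefix sums of any feasible $\vec u$ dominate those of $\hat u$, and then convert prefix-sum dominance into the log-product inequality via the tangent bound $\log u_i-\log\hat u_i\ge(u_i-\hat u_i)/u_i$ and Abel summation, where the sortedness of $\vec u$ makes the weights $1/u_i$ non-increasing. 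What each approach buys: the paper's argument is conceptually a one-liner (``concave function minimized at a vertex'') but as written it is terse and leaves delicate points implicit (the feasible region is unbounded, so attainment needs the monotonicity of the objective; the perturbation direction $\varepsilon/(N-i)$ is not even well-defined at $i=N$ as stated), whereas your argument is elementary, constructive, and handles the degenerate cases ($\hat u_i=0$, and implicitly $u_1>0$ via the first constraint) explicitly, at the modest cost of the extra majorization-style bookkeeping with $C_k=U_k-\hat U_k\ge 0$. Both arguments silently fix the same index typo in the lemma statement ($\tilde u_i$ should be $\tilde u_j$ inside the sum), which is the intended reading.
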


\begin{proof}
    For ease of exposition, we rewrite Equation~\eqref{eqn:binary-inquality} as:
    \begin{equation}
        \label{eqn:step-function}
        \left\{
        \begin{aligned}
            &u_1 \cdot N& \geq~ &\tilde{u}_1\\
            &u_1 \cdot N + (u_2 - u_1) \cdot (N - 1) &\geq~ &\tilde{u}_1+\tilde{u}_2\\
            &...\\
            &u_1 \cdot N + \sum_{i = 1}^{N-1} ((u_{i+1} - u_{i}) \cdot (N - i) ) &\geq &\sum_{i}^{N} \tilde{u}_i
        \end{aligned}
        \right.
    \end{equation}
    Along with the restrictions $u_1\leq u_2\leq \cdots \le u_N$, the feasible region of $(u_i)_{i\in[N]}$ is a polytope. Since $\sum_{i=1}^{N} \log u_i$ is concave, it reaches minimum at a vertex of the polytope, i.e., where exactly $n$ of the inequalities are equalities. Let $(\bar{u}_i)_{i\in[N]}$ be a solution with minimum $\sum_{i=1}^{N} \log u_i$. Suppose there exists $i$ such that $\bar{u}_i=\bar{u}_{i+1}$. Let $i^*$ be the smallest such $i$. The $i^*$-th inequality in Equation \eqref{eqn:step-function} is not tight.

    Define $(u'_1,u'_2,\cdots,u'_N)$ as
    \begin{equation}
        \left\{
        \begin{aligned}
           &u'_i = \bar{u}_i & \qquad &\forall i< i^*\\
           &u'_i = \bar{u}_i-\varepsilon & &i=i^*\\
           &u'_i = \bar{u}_i+\frac{\varepsilon}{N-i} & &\forall i>i^*
        \end{aligned}
        \right.
    \end{equation}
    for sufficiently small $\varepsilon>0$. $(u'_1,u'_2,\cdots,u'_N)$ satisfies Equation \eqref{eqn:step-function}, and $u'_1u'_2\cdots u'_N<\bar{u}_1\bar{u}_2\cdots\bar{u}_N$ holds since $\bar{u}_i$ is the smallest among $\bar{u}_i,\bar{u}_{i+1},\cdots,\bar{u}_N$, contradicting the minimality of $(\bar{u}_i)_{i\in[N]}$.

    Therefore, $u_i=\sum_{i' = 1}^{i}\frac{\tilde{u}_i}{N+1-i'}$ in the  solution that minimizes $\sum_{i=1}^{N} \log u_i$.
\end{proof}

We will relax the denominators on the right-hand side of Equation \eqref{eqn:binary-utility} to be $N$ in the subsequent analysis.
We get that:
\begin{equation}
    \label{eqn:impartial-binary-relaxed-bound}
    \sum_{i=1}^{N} \log u_i ~ \geq ~ \sum_{i = 1}^{N} \log \left( \frac{1}{i} \sum_{j = 1}^{i} \tilde{u}_i \right) - \log \frac{N^N}{N!}
    ~.
\end{equation}

If the benchmark allocation $\vec{\tilde{x}}$ was $1$-impartial, we would have $\tilde{u}_i = \big(\prod_{j=1}^N \tilde{u}_j \big)^{\frac{1}{N}}$ for all agents $1 \le i \le N$.
In that case, the above Inequality~\eqref{eqn:impartial-binary-relaxed-bound} would already imply a constant competitive ratio because it can be written as:
\[
    \log \left(\prod_{i=1}^N u_i\right)^{\frac{1}{N}} ~ \geq ~ \log \left(\prod_{i=1}^N \tilde{u}_i\right)^{\frac{1}{N}} - ~ \log \frac{N}{(N!)^{\frac{1}{N}}}
    ~,
\]
where the second term on the right-hand side is $O(1)$ by Stirling's approximation.

For $\mu$-impartial instances with an arbitrary $\mu \ge 1$, we need one more inequality given in the next lemma.
The form of the inequality is clean so we suspect that it may have already been proved in the past.
To our best effort, however, we only found Hardy's Inequality and its several variants (c.f., \citet*{HardyLP:1952}) to have a similar spirit as they also compare the function values of a sequence of non-negative real numbers and the functions values of their prefix averages;
but those inequalities do not imply our lemma.
It would be interesting to find further applications of this inequality in other problems.


\begin{lemma}
    \label{lem:binary-upper-bound}
    Suppose that $a_1, a_2, \dots, a_N$ are real numbers between $1$ and $\mu$.
    We have:
    \[
        \frac{1}{N} \sum_{i=1}^N \log a_i ~ - ~ \frac{1}{N} \sum_{i=1}^N \log \left(
        \frac{1}{i} \sum_{j=1}^i a_j \right) ~ \le ~ \log (\log \mu + 1) + 1
        ~.
    \]
\end{lemma}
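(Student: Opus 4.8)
The plan is to bound the left-hand side by controlling, for each index $i$, the gap between $\log a_i$ and $\log\big(\frac{1}{i}\sum_{j=1}^i a_j\big)$, but since this gap can be as large as $\log\mu$ for a single $i$, we cannot argue termwise. Instead I would sum over $i$ and reorganize. First I would discard the (only helpful) prefix-average terms that are large: the obstacle is really the indices $i$ where $a_i$ is much bigger than the running average $A_i := \frac{1}{i}\sum_{j=1}^i a_j$. Write $D = \sum_{i=1}^N \big(\log a_i - \log A_i\big)$. Since each $a_i \le \mu$ and each $A_i \ge 1$, every term is at most $\log\mu$; the goal is to show that on average the terms are only $\log(\log\mu+1)+1$.

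The key idea I would pursue is a \emph{layering / dyadic bucketing} argument on the values $a_i$. Partition the indices into buckets $B_k = \{ i : 2^{k-1} < a_i \le 2^k \}$ for $k = 0, 1, \dots, \lceil \log_2 \mu\rceil$, so there are $O(\log\mu)$ buckets. Within bucket $B_k$, we have $\log a_i \le k$ (in base-$2$ logs; I will convert constants at the end). The running average $A_i$ is nondecreasing-ish in the sense that once many large elements have appeared, $A_i$ stays large; more precisely, if $i$ is the $m$-th index (in arrival order) lying in buckets $B_k \cup B_{k+1} \cup \cdots$, then among $a_1,\dots,a_i$ at least $m$ of them are $\ge 2^{k-1}$, so $A_i \ge \frac{m}{i}\,2^{k-1}$. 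The contribution of the $m$-th such index to $D$ is then at most $\log a_i - \log A_i \le \log\frac{2^k i}{m 2^{k-1}} = \log\frac{2i}{m}$. Summing this bound over $m = 1,\dots,|B_k^{\ge}|$ where $B_k^{\ge} = B_k \cup B_{k+1}\cup\cdots$, and using $\sum_{m=1}^{L}\log\frac{2i_m}{m}$ telescoping against $\sum \log(1/m) = -\log L!$ via Stirling, each ``level threshold'' $k$ contributes $O(N)$ total, and there are $O(\log\mu)$ levels — giving $D = O(N\log\log\mu)$, hence the per-index average $O(\log\log\mu)$, which is $\log(\log\mu+1)+O(1)$.

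An alternative, possibly cleaner route: for a threshold parameter $\theta$, split each term as $\log a_i - \log A_i \le \min\{\log\mu,\ \log a_i - \log A_i\}$ and note $\log a_i - \log A_i = \log(a_i/A_i)$. Observe that the indices with $a_i/A_i > 2$ are ``few'' in a precise sense: whenever $a_i > 2A_i = 2A_{i-1}\frac{i-1}{i} + \frac{2a_i}{i}$, the average at least multiplies by roughly $1 + \frac{1}{i}$, i.e. $A_i \ge A_{i-1}(1 + c/i)$ for a constant $c>0$; since $A$ starts at $a_1 \ge 1$ and ends at $A_N \le \mu$, and $\prod (1+c/i) = i^{\Theta(c)}$ only after many such steps, the number of indices with $a_i/A_i > 2$ is $O(\log\mu / \log(\text{something}))$ — this needs care. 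More robustly: let $M_i = \max_{j\le i} a_j$; then $A_i \ge M_i/i$ trivially, but we want a bound via \emph{when $M$ jumps}. The clean statement is $\sum_i \log\frac{a_i}{A_i} \le \sum_i \log\frac{M_i}{A_i} + \sum_i\log\frac{a_i}{M_i}$; the second sum is $\le 0$; for the first, since $A_i \ge M_i/i$ is too weak, instead use that if $M$ has value $m$ over a block of consecutive indices $[\ell, r]$, then for $i$ in that block $A_i \ge \frac{(i-\ell+1)m}{i} \cdot \frac{1}{2}$-ish once the block is long, so $\log\frac{M_i}{A_i}\le \log\frac{2i}{i-\ell+1}$, and summing over the block telescopes to $O(r-\ell+1)$; summing over the $O(\log\mu)$ distinct values of $M$ gives $O(\log\mu)\cdot$ (geometric tail) — I'd need to verify the blocks' lengths are controlled so the total is $O(N\log\log\mu)$ rather than $O(N\log\mu)$.

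The hard part will be making the counting tight enough to get $\log\log\mu$ rather than $\log\mu$ in the final bound: the naive ``$O(\log\mu)$ buckets, each costing $O(1)$ on average'' gives the wrong answer, so the real content is showing that the \emph{total} cost across all levels is $O(N\log\log\mu)$, which requires exploiting that an index heavily penalized at level $k$ forces the prefix average to be large, which in turn discounts its penalty at all lower levels — i.e. the penalties telescope across levels, not just within a level. I expect the cleanest writeup fixes an index $i$, lets $r_i$ be the rank of $a_i$ among $a_1,\dots,a_i$ sorted decreasingly, notes $A_i \ge \frac{r_i a_i}{i}$ hence $\log\frac{a_i}{A_i}\le \log\frac{i}{r_i}$, and then separately bounds $\sum_i \log\frac{i}{r_i}$; but this ignores the constraint $a_i\le\mu$ entirely and would give $O(N)$ for free — so in fact I suspect the right bound $\log\frac{a_i}{A_i}\le \min\{\log\mu,\ \log\frac{i}{r_i}\}$ combined with a charging argument is what yields exactly $\log(\log\mu+1)+1$, and pinning down that charging argument is the crux.
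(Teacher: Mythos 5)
Your proposal is a plan, not a proof, and the step you yourself flag as ``the crux'' is exactly the step the paper's argument supplies and that none of your three routes completes. The paper writes the left-hand side as a layer-cake integral $\int_0^\infty \frac{1}{N}\#\{i:\log a_i-\log A_i\ge\theta\}\,\mathrm{d}\theta$ (with $A_i=\frac1i\sum_{j\le i}a_j$, after sorting the $a_i$ increasingly and normalizing $a_1=1$), and the whole content is the \emph{quantitative} count: if $i_1<\dots<i_k$ are the indices with gap at least $\theta$, then each consecutive pair forces geometric growth $a_{i_{j+1}}\ge e^{(e^\theta-1)/N}a_{i_j}$ (and $a_{i_1}\ge e^{(e^\theta-1)/N}$), so $a_{i_k}\le\mu$ yields $k/N\le\frac{\log\mu}{e^\theta-1}$; integrating $\min\{1,\frac{\log\mu}{e^\theta-1}\}$ then gives exactly $\log(\log\mu+1)+1$. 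Your ``alternative route'' gestures at this (a large gap at index $i$ forces $A_i\gtrsim A_{i-1}(1+c/i)$), but you only consider the single threshold $2$, you never set up the integral over thresholds, and your counting claim is wrong as stated: the number of indices with $a_i>2A_i$ is \emph{not} $O(\log\mu/\log(\cdot))$ independently of $N$. Take $N/2$ values equal to $1$ followed by $N/2$ values equal to $e$ (so $\mu=e$): every index in the second half satisfies $a_i>2A_i$, i.e.\ there are $\Theta(N)$ such indices while $\log\mu=1$. The correct statement must carry the factor $N$ and the threshold dependence, $k\lesssim N\log\mu/(e^\theta-1)$, and it is precisely the $1/(e^\theta-1)$ decay, integrated over $\theta$, that turns ``$\log\mu$'' into ``$\log\log\mu$''.

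The other two routes do not close the gap either. In the dyadic-bucketing route your own arithmetic does not cohere: $O(\log\mu)$ levels each contributing $O(N)$ gives $O(N\log\mu)$, not $O(N\log\log\mu)$, and you concede at the end that the naive per-level accounting ``gives the wrong answer'' without supplying the cross-level charging that would fix it. In the rank-based route, the inequality $\log\frac{a_i}{A_i}\le\log\frac{i}{r_i}$ is fine, but the assertion that $\sum_i\log\frac{i}{r_i}$ ``would give $O(N)$ for free'' is false: for an increasing sequence $r_i=1$ for all $i$ and the sum is $\log N!=\Theta(N\log N)$, so rank information alone (ignoring the magnitude cap $\mu$) cannot possibly yield the bound, as indeed it must not, since the true answer grows with $\log\log\mu$. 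In short: you have correctly identified that one must bound how many indices can have a large gap and that large gaps force the prefix average to grow, but the decisive quantitative lemma (count of $\theta$-bad indices at most $N\log\mu/(e^\theta-1)$) and the integration over thresholds are missing, so the proposal as written does not establish the lemma.
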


\begin{proof}
    It is without loss of generality to assume that $a_1 \le a_2 \le \dots \le a_N$ since this order minimizes the second term on the left-hand side conditioned on the set of these $N$ real numbers.
    We may further assume $a_1 = 1$ because the left-hand side remains the same if $a_1, a_2, \dots, a_N$ are multiplied by the same factor.

    Our proof strategy is to consider all possible thresholds $\theta \ge 0$ and to bound the number of indices ($1 \le i \le N$) for which $\log a_i - \log \frac{1}{i} \sum_{j=1}^i a_j$ is at least $\theta$.
    First, we rewrite the left-hand side as:
    \[
        \int_{0}^{\infty} \frac{\sum_{i = 1} ^{N} \mathbf{1} \big(\, \log a_i - \log (\frac{1}{i} \sum_{j=1}^i a_j) \geq \theta \,\big)}{N} \dif{\theta}
        ~,
    \]
    where $\mathbf{1}(\cdot)$ is the indicator function.
    
    For any $\theta \geq 0$, suppose that $i_1 < i_2 < \cdots < i_k$ are the indices for which
    \[
        \log a_i - \log \left( \frac{1}{i} \sum_{j=1}^i a_j \right) \geq \theta
        ~.
    \]

    First, by $a_1 = 1 < a_{i_1}$ and $a_j \le a_{i_1}$ for all $j \le i$, the left-hand side above is strictly smaller than $\log i_1$ when $i = i_1$.
    This gives $i_1 > e^\theta$.

    For ease of exposition, in the remaining argument we define $i_0 = 1, a_{i_0}=1$.
    Then, we have that for any $j \ge 1$:
    \begin{align*}
        \frac{i_j}{e^{\theta}}a_{i_j} \ge & ~ \left(a_1 + a_2 + \cdots + a_{i_j}\right) \\
        \geq & ~ (i_1 - i_0) a_{i_0} + (i_2 - i_1) a_{i_1} + \cdots + (i_j - i_{j-1}) a_{i_{j-1}} + a_{i_j}
        ~.
    \end{align*}

    Subject to these inequalities, $a_{i_k}$ achieves its minimum value when equalities hold everywhere.
    In that case, for any $j \ge 1$ we have (by taking the difference of two consecutive equalities):
    \[
        \frac{i_{j+1}}{e^{\theta}} a_{i_{j+1}} - \frac{i_j}{e^{\theta}} a_{i_j} = (i_{j+1} - i_j - 1) a_{i_j} + a_{i_{j+1}}
        ~.
    \]

    Therefore:
    \begin{align*}
        a_{i_{j+1}} = & ~\left( e^{\theta} - \frac{ (e^{\theta} - 1) (\frac{i_j}{e^{\theta}} - 1) }{ \frac{i_{j+1}}{e^{\theta}} - 1 } \right) a_{i_j}  \\
        \geq & ~ \left( e^{\theta} - \frac{ (e^{\theta} - 1) (\frac{i_j}{e^{\theta}} - 1) }{ \frac{i_j+1}{e^{\theta}} - 1 } \right) a_{i_j}
        &&
        \text{($i_{j+1} \geq i_j + 1$)} \\
        \geq & ~ \left( e^{\theta} - \frac{ (e^{\theta} - 1) (\frac{N}{e^{\theta}} - 1) }{ \frac{N+1}{e^{\theta}} - 1 } \right) a_{i_j} 
        &&
        \text{($i_j \leq N$)} \\
        = & ~ \frac{1}{1 - \frac{e^{\theta} - 1}{N}} a_{i_j} \\
        \geq & ~ e^{\frac{e^{\theta} - 1}{N}} a_{i_j}
        ~.
        && \text{($1-y \leq e^{-y}$)}
    \end{align*}

    For $a_{i_1}$, we have $a_{i_1} = \frac{e^\theta}{i_1} (i_1-1 + a_{i_1})$ which means (recall that $i_1 > e^\theta$):
    \[
        a_{i_1} = \frac{(i_1-1) e^\theta}{i_1 - e^\theta}  \ge \frac{N e^\theta}{(N+1) e^{-\theta} - 1} = \frac{e^\theta}{1 - \frac{e^\theta-1}{N}} \ge \frac{1}{1 - \frac{e^\theta-1}{N}} \ge e^{\frac{e^\theta-1}{N}}
        ~.
    \]

    Combining the above two inequalities, we get that $a_{i_k} \ge e^{\frac{k}{N} (e^\theta-1)}$.
    On the other hand, we have $a_{i_k} \le \mu$.
    Hence:
    \[
        \frac{k}{N} \le \frac{\log \mu}{e^\theta - 1}
        ~.
    \]
    
    Therefore, the left-hand side of the lemma's inequality is at most:
    \begin{align*}
        \int_{0}^{\infty} \min\left\{\frac{\log \mu}{e^{\theta} - 1}, 1 \right\} \dif{\theta}
        &
        = \int_{0}^{\log(\log \mu + 1)} 1 \dif{\theta} + \int_{\log(\log \mu + 1)}^{\infty} \frac{\log \mu}{e^{\theta} - 1} \dif{\theta} \\
        &
        = \log (\log \mu + 1) + \log \mu \cdot \log \left( 1 + \frac{1}{\log\mu} \right) \\[1ex]
        &
        \le \log (\log \mu + 1) + 1 
        ~,
    \end{align*}
    where the last inequality follows by $\log(1+y) \le y$ for any $y \ge 0$.
\end{proof}

Lemma~\ref{lem:binary-greedy-ratio} then follows by combining Equation~\eqref{eqn:impartial-binary-relaxed-bound}, Stirling's approximation, and Lemma~\ref{lem:binary-upper-bound} with $a_i = \frac{\tilde{u}_i}{\min_{1 \le j \le N} \tilde{u}_j}$ and $\mu = \tilde{\mu}$.

\subsection{Algorithm with a Known Upper Bound of the Impartiality Ratio}
\label{sec:greedy-reduction-bounded}

Suppose that we are given an upper bound $\mu$ of the instance's impartiality ratio $\mu^*$.
In other words, we know that the instance is $\mu$-impartial.
For any item $t$, let $\bar{v}_t=\max_{i\in[N]}v_{it}$ denote the maximum value of any agent for item $t$. 
By Lemma \ref{lem:maximum-mu-value}, we should only allocate item $t$ to the agents whose values for item $t$ is at least $\frac{1}{\mu}\bar{v}_t$.
A naïve approach is to treat all such agents as if they had value $\bar{v}_t$ for the item, and 
then apply the Myopic Greedy algorithm;
doing so would lose an extra factor $\mu$ in the competitive ratio in the worst case.
Instead, we divide the item equally into $\lceil \log \mu \rceil$ sub-items, each with supply $\frac{s_t}{\lceil \log \mu \rceil}$.
For the $j$-th sub-item, which we referred to as sub-item $(t, j)$, the agents' values are rounded down to either $\frac{\bar{v}_t}{2^j}$ or $0$.
That is, an agent $i$ has value $\frac{\bar{v}_t}{2^j}$ for sub-item $(t, j)$ if its original value for item $t$ is at least as much, and has value $0$ for this sub-item otherwise.
Then, we run the Myopic Greedy algorithm to allocate the sub-items.
See Algorithm~\ref{alg:greedy-reduction-mu} for a formal definition.

\begin{algorithm}[t] 
    \caption{\textbf{Greedy with Rounded Values}\ (for $\mu$-impartial instances)}
    \label{alg:greedy-reduction-mu}

\For {\text{\rm each item $1 \le t \le T$}}
{
    Let $\bar{v}_t = \max_{1 \le i \le N} v_{it}$
    
    \For{j = 1 to $\lceil \log \mu \rceil$}
    {
        Let there be a sub-item $(t, j)$ with supply $\frac{s_t}{\lceil \log \mu \rceil}$.

        Let each agent $i$'s value for the sub-item be:
        \[
            v_{i(t,j)}=
            \begin{cases}
                \frac{\bar{v}_t}{2^j} & \mbox{if $v_{it}\geq \frac{\bar{v}_t}{2^j}$} \\[1ex]
                0 & \mbox{otherwise.}
            \end{cases}
        \]
    }
    
    Let Algorithm~\ref{alg:greedy-binary} allocate the sub-items and let the allocation be $(x_{i(t,j)})_{1 \le i \le N, 1 \le j \le \lceil \log \mu \rceil}$.
    
    Allocate $x_{it} = \sum_{j=1}^{\lceil \log \mu \rceil} x_{i(t,j)}$ amount of item $t$ to each agent $1 \le i \le N$.
}
\end{algorithm}

\begin{theorem}
    \label{thm:greedy-reduction-mu}
    Algorithm \ref{alg:greedy-reduction-mu} is $O(\log^2 \mu)$-competitive.
\end{theorem}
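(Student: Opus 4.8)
The plan is to reduce the $\mu$-impartial instance to the binary-value case handled in Lemma~\ref{lem:binary-greedy-ratio}. Write $L = \lceil \log \mu \rceil$ for the number of sub-items created from each original item. The heart of the argument is to exhibit a feasible allocation $\vec{\tilde{x}}$ of the \emph{sub-item} instance that (a) is $O(\mu)$-impartial and (b) has Nash welfare within an $O(L)$ factor of the optimal Nash welfare of the \emph{original} instance. Once we have such a witness, Lemma~\ref{lem:binary-greedy-ratio} shows that Algorithm~\ref{alg:greedy-binary} on the sub-items achieves Nash welfare within an $O(\log \mu)$ factor of $\vec{\tilde{x}}$, and a final easy step transfers this guarantee back to the original items.

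Here is how I would build $\vec{\tilde{x}}$. Start from a Nash welfare maximizing allocation $\vec{x^*}$ of the original instance; since $\mu \ge \mu^*$ it is $\mu$-impartial. For any agent $i$ and item $t$ with $x^*_{it} > 0$, Lemma~\ref{lem:maximum-mu-value} gives $\bar{v}_t/\mu \le v_{it} \le \bar{v}_t$, so there is an index $1 \le j(i,t) \le L$ (namely $\max\{1, \lceil \log(\bar{v}_t/v_{it}) \rceil\}$) for which the rounded value is $v_{i(t,j(i,t))} = \bar{v}_t/2^{j(i,t)}$ and satisfies $v_{it}/2 \le v_{i(t,j(i,t))} \le v_{it}$. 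Define $\tilde{x}_{i(t,j)} = x^*_{it}/L$ when $j = j(i,t)$ and $\tilde{x}_{i(t,j)} = 0$ otherwise. Then $\vec{\tilde{x}}$ is feasible: for each sub-item $(t,j)$ we have $\sum_i \tilde{x}_{i(t,j)} \le \frac1L \sum_i x^*_{it} \le s_t/L$, which is exactly the supply of sub-item $(t,j)$. Moreover the induced utilities satisfy $\frac{u^*_i}{2L} \le \tilde{u}_i \le \frac{u^*_i}{L}$, the upper bound because rounded values never exceed the original ones and the lower bound because each rounded value is at least half the original. The two-sided bound immediately yields that $\vec{\tilde{x}}$ is $2\mu$-impartial (as $\tilde{u}_i/\tilde{u}_j \le 2u^*_i/u^*_j \le 2\mu$) and that the Nash welfare of $\vec{\tilde{x}}$ is at least $\frac{1}{2L}$ times the optimal Nash welfare of the original instance.

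Now I would apply Lemma~\ref{lem:binary-greedy-ratio} to the binary sub-item instance with benchmark $\vec{\tilde{x}}$ and $\tilde{\mu} = 2\mu$: Algorithm~\ref{alg:greedy-binary}'s Nash welfare on the sub-items is within an $O(\log \tilde{\mu}) = O(\log \mu)$ factor of that of $\vec{\tilde{x}}$. Finally, the utility that Algorithm~\ref{alg:greedy-reduction-mu} gives agent $i$ in the original instance is $u_i = \sum_t v_{it} \sum_j x_{i(t,j)} \ge \sum_t \sum_j v_{i(t,j)} x_{i(t,j)}$, since $v_{it} \ge v_{i(t,j)}$ whenever $v_{i(t,j)} > 0$ (and we may assume Myopic Greedy never allocates a sub-item to a zero-value agent), so the algorithm's true Nash welfare is at least its Nash welfare on the sub-item instance. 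Chaining $\mathrm{OPT} \le 2L \cdot \mathrm{NSW}(\vec{\tilde{x}})$, $\mathrm{NSW}(\vec{\tilde{x}}) \le O(\log\mu)\cdot \mathrm{NSW}_{\mathrm{sub}}(\mathrm{ALG})$, and $\mathrm{NSW}_{\mathrm{sub}}(\mathrm{ALG}) \le \mathrm{NSW}(\mathrm{ALG})$ gives a competitive ratio of $O(L \log \mu) = O(\log^2 \mu)$.

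The only place requiring care, and the crux of the proof, is the construction and verification of the witness $\vec{\tilde{x}}$: choosing the sub-item index $j(i,t)$ so that rounding costs only a factor $2$, checking that the $L$-way split of each item's supply exactly accommodates all agents routed to the same sub-item, and establishing the two-sided bound $\frac{u^*_i}{2L} \le \tilde{u}_i \le \frac{u^*_i}{L}$, which simultaneously controls both the Nash welfare loss and the impartiality blow-up. Everything downstream is a direct invocation of Lemma~\ref{lem:binary-greedy-ratio} and bookkeeping.
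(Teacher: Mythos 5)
Your proof is correct and follows essentially the same route as the paper: you build a sub-item benchmark allocation from $\vec{x^*}$, establish the two-sided bound $\frac{u_i^*}{2\lceil\log\mu\rceil} \le \tilde{u}_i \le \frac{u_i^*}{\lceil\log\mu\rceil}$ (giving $2\mu$-impartiality and an $\Omega(1/\log\mu)$ Nash welfare loss), invoke Lemma~\ref{lem:binary-greedy-ratio}, and transfer the guarantee back via $v_{it} \ge v_{i(t,j)}$. The only cosmetic difference is that you route each agent's mass to the single sub-item matching its rounded value, while the paper's calculation spreads it over all sub-items of no larger rounded value; both yield the same bounds.
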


\begin{proof}
    Recall that $\vec{x^*} = (x^*_{it})_{1 \le i \le N, 1 \le t \le T}$ denote the Nash welfare maximizing allocation.
    We will consider a corresponding feasible allocation $(x'_{i(t,j)})$ of the sub-items as follows:
    \[
        x'_{i(t,j)} =
        \begin{cases}
            \frac{x_{it}^*}{\lceil \log \mu \rceil} & \mbox{if $\frac{\bar{v}_t}{2^j}\leq v_{it}<\frac{\bar{v}_t}{2^{j-1}}$} \\[1ex]
            0 & \mbox{otherwise.}
        \end{cases}
    \]


    Suppose that an agent $i$'s value for item $t$ satisfies $\frac{\bar{v}_t}{2^\ell} \le v_{it} < \frac{\bar{v}_t}{2^{\ell-1}}$ for integer $\ell \ge 1$.
    Then its utility for the sub-items $(t, j)$, $1 \le j \le \lceil \log \mu \rceil$, allocated to it would be:
    \[
        \sum_{j=1}^{\lceil \log \mu \rceil} x'_{i(t,j)}v_{i(t,j)} = \sum_{j=\ell}^{\lceil \log \mu \rceil} x'_{i(t,j)}v_{i(t,j)} = \frac{x^*_{it}}{\lceil \log \mu \rceil} \sum_{j=\ell}^{\lceil \log \mu \rceil} \frac{\bar{v}_t}{2^j}
        ~.
    \]

    The summation on the right-hand-side is at least $\frac{\bar{v}_t}{2^j}$ and at most $\frac{\bar{v}_t}{2^{j-1}}$.
    Thus, we conclude that:
    \begin{equation*}
        \frac{x_{it}^*v_{it}}{2\lceil \log \mu \rceil} \leq\sum_{j=1}^{\lceil \log \mu \rceil} x'_{i(t,j)}v_{i(t,j)} \leq \frac{x_{it}^*v_{it}}{\lceil \log \mu \rceil}.
    \end{equation*}

    The agents' utilities $\vec{u'}$ for the sub-items satisfy that for any agent $1 \le i \le N$:
    \begin{equation*}
        \frac{u_i^*}{2\lceil \log \mu \rceil} \leq u'_i \leq \frac{u_i^*}{\lceil \log \mu \rceil}.
    \end{equation*}

    This implies two properties.
    First, the allocation of sub-items is $2\mu$-impartial since the original allocation $\vec{x^*}$ is $\mu$-impartial.
    Second, the Nash welfare of the allocation of sub-items is at least an $\Omega(\frac{1}{\log \mu})$ fraction of the optimal Nash welfare of the original instance.
    Therefore, by Lemma \ref{lem:binary-greedy-ratio}, we get an $O(\log \mu)$-competitive allocation comparing to the Nash welfare of sub-item allocation, which implies an $O(\log^2 \mu)$-competitive allocation w.r.t.\ the original instance.
\end{proof}

\subsection{Guessing the Impartiality Ratio}
\label{sec:greedy-reduction-general}

This is almost verbatim to the counterpart for balanced instances.
When we have no prior knowledge of the impartiality ratio, we can guess an upper bound by sampling from an appropriate distribution.
Since the final ratio depends logarithmically on the upper bound, it suffices to make a good enough guess that is at most a polynomial of the true impartiality ratio $\mu^*$.

We shall again consider a sequence of numbers starting from $2$, such that each sequel number is the square of the previous number.
We will sample each number $\mu$ with a probability that is inverse polynomial in $\log\log \mu$;
this ensures that the correct guess is made with a sufficiently large probability.
Finally, we apply the prior-dependent Greedy with Rounded Values algorithm (Algorithm~\ref{alg:greedy-reduction-mu}) with the guessed upper bound $\mu$.
See Algorithm~\ref{alg:greedy-reduction-general} for a formal definition.

\begin{algorithm}[t]
    \caption{\textbf{Greedy with Rounded Values}\ (for  instances with unknown impartiality ratio)}
    \label{alg:greedy-reduction-general}
    
    Sample $\mu$ such that it equals $2^{2^k}$ with probability $\frac{6}{\pi^2} \cdot \frac{1}{(k+1)^2}$ for any non-negative integer $k$.
    
    Run Algorithm~\ref{alg:greedy-reduction-mu} with $\mu$ to allocate the items to the agents.
\end{algorithm}

\begin{theorem}
    \label{thm:greedy-reduction-general}
    Algorithm \ref{alg:greedy-reduction-general} is $O\big( \log^2\mu^* (\log \log \mu^*)^2 \big)$-competitive.
\end{theorem}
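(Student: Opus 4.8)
The plan is to mirror the guessing argument already used for balanced instances (Theorem~\ref{thm:half-greedy-general}) essentially verbatim, since Algorithm~\ref{alg:greedy-reduction-general} just wraps the prior‑dependent Algorithm~\ref{alg:greedy-reduction-mu} in the same geometric‑squaring sampling scheme. First I would let $k$ be the smallest positive integer with $\mu^* \le 2^{2^k}$. If $\mu^*$ is so small that $\log\log\mu^*$ is not meaningfully positive (say $\mu^* \le 4$, so $k=1$), then the claimed bound is $\Omega(1)$ and the result is immediate: the guess $\mu = 2^{2^1} = 4 \ge \mu^*$ is sampled with constant probability $\tfrac{6}{\pi^2}\cdot\tfrac14$, the instance is $\mu$‑impartial, and Algorithm~\ref{alg:greedy-reduction-mu} is $O(\log^2 4) = O(1)$‑competitive by Theorem~\ref{thm:greedy-reduction-mu}. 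So assume $\mu^* > 4$, hence $2^{2^{k-1}} < \mu^* \le 2^{2^k}$.

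This two‑sided bound gives exactly the two facts I need. On the one hand, $\log\log\mu^* > k-1$, so $k = O(\log\log\mu^*)$ and Algorithm~\ref{alg:greedy-reduction-general} picks the value $\mu = 2^{2^k}$ with probability $\tfrac{6}{\pi^2}\cdot\tfrac{1}{(k+1)^2} = \Omega\big(1/(\log\log\mu^*)^2\big)$. On the other hand, $\log\mu^* > 2^{k-1}$, hence $\log\mu = 2^k < 2\log\mu^*$, so that $\log^2\mu = O(\log^2\mu^*)$ even though the guessed upper bound may be as large as roughly $(\mu^*)^2$. I would then condition on the correct‑guess event $\mu = 2^{2^k}$: since $\mu \ge \mu^*$ the instance is $\mu$‑impartial, so Theorem~\ref{thm:greedy-reduction-mu} applies and the resulting allocation has Nash welfare within an $O(\log^2\mu) = O(\log^2\mu^*)$ factor of the optimum.

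Finally I would remove the conditioning using non‑negativity of Nash welfare: discarding the contribution of every wrong guess, the expected Nash welfare of Algorithm~\ref{alg:greedy-reduction-general} is at least $\Omega\big(1/(\log\log\mu^*)^2\big)$ times its conditional value on the correct guess, hence at least $\Omega\big(1/(\log\log\mu^*)^2\big)\cdot\Omega\big(1/\log^2\mu^*\big)$ times the optimum; rearranging yields the competitive ratio $O\big(\log^2\mu^*(\log\log\mu^*)^2\big)$. I do not expect a genuine obstacle here — the only two points deserving a careful sentence are the observation that squaring the guess inflates $\log^2\mu$ only by a constant relative to $\log^2\mu^*$, and the separate handling of the small‑$\mu^*$ regime where $\log\log\mu^*$ degenerates.
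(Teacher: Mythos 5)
Your proposal is correct and follows essentially the same route as the paper's own proof: guess $\mu = 2^{2^k}$ with probability $\Omega(1/(\log\log\mu^*)^2)$, invoke Theorem~\ref{thm:greedy-reduction-mu} on the correct guess noting $\log\mu \le 2\log\mu^*$, and discard the (non-negative) contribution of wrong guesses. The only difference is that you spell out the small-$\mu^*$ regime and the constant-factor loss from squaring explicitly, which the paper leaves implicit.
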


\begin{proof}
    Suppose that $k$ is the smallest positive integer such that the balance ratio $\mu^*$ is at most $2^{2^k}$.
    Then, we have $2^{2^{k-1}} < \mu^* \le 2^{2^k}$.
    This further implies that $\log \log \lambda^* > k - 1$.
    Hence, the algorithm correctly guesses $\mu = 2^{2^k}$ with probability at least $\Omega(\frac{1}{k^2}) = \Omega( \frac{1}{(\log\log \mu^*)^2})$.
    When that happens, the algorithm's Nash welfare is an $O(\log^2 \mu^* N)$ approximation to the optimal Nash welfare.
    Therefore, even if the algorithm got zero Nash welfare from the other guesses, we would still have the stated competitive ratio.
\end{proof}

\section{Lower Bounds}

We first restate a lower bound by \citet{BanerjeeGGJ:SODA:2022} under our model.

\begin{lemma}[\citet{BanerjeeGGJ:SODA:2022}, Theorem 3]
    \label{lem:lower-bound-BanerjeeGGJ}
    No online algorithm can achieve a competitive ratio better than $\log^{1-o(1)} N$,%
    \footnote{The original theorem by \citet{BanerjeeGGJ:SODA:2022} only claimed a weaker bound of $\log^{1-\varepsilon} N$ but their proof actually showed a slightly stronger bound that we restate here.}
    even for $1$-balanced instance.
\end{lemma}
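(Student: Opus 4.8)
The statement re-expresses Theorem~3 of \citet{BanerjeeGGJ:SODA:2022} in the present notation, so the plan is to adapt their adversarial construction, verify it can be made $1$-balanced, and read off the slightly sharper exponent. By the equivalence of randomized and deterministic online algorithms for divisible items recorded in Section~\ref{sec:prelim}, it suffices to fool an arbitrary \emph{deterministic} algorithm, so I would build the hard instance adaptively against such an algorithm. The instance is organized along $\Theta(\log N)$ ``levels'' and a laminar (nested) family of agent subsets refined level by level: at each level one item is presented for each current subset, valued at a common rate by exactly the agents of that subset and at rate $0$ by everyone else; after observing how the algorithm splits such an item, the adversary partitions the subset into two children and lets the agents who received the \emph{smaller} share constitute the ``surviving'' child that continues to receive items at deeper levels. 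The point is that the agents who keep surviving are, by construction, exactly the ones the algorithm repeatedly under-served, while the algorithm had no way to recognize them early.

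The first thing to check is $1$-balance, which here is essentially automatic: since the subsets at each level partition all agents, every agent values exactly one item per level, at that level's rate, so scaling the per-level value (equivalently the per-level supply) so that the rates an agent sees have a fixed sum makes \emph{all} monopolist utilities equal, independently of the shape the laminar family ends up taking — the adversary's only freedom is in the shape, never in the values. For the analysis, I would bound the offline optimum from below by the allocation that, for each subset, hands that subset's item to exactly the agents leaving the family at that level; a short counting argument over the laminar family, together with the per-level value choices (geometrically graded down the levels), shows this gives every agent a comparable utility and hence a good Nash welfare. On the algorithm's side one argues that at each level, being ignorant of which child will survive, it cannot concentrate the item the way the optimum does, and a potential/duality argument accumulating the per-level losses converts this into the claimed $\log^{1-o(1)}N$ gap.

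The crux — and the only genuinely delicate step — is that final conversion into a gap for the \emph{geometric mean}. Hurting a single ``always-unlucky'' agent changes the geometric mean only by a negligible $(\cdot)^{1/N}$ factor, so the argument must show that a constant fraction of the agents each lose, on average over them, close to a $\log N$ factor relative to the optimum. This is what pins down all the parameters together: the subsets must shrink at a controlled (subexponential) rate over $\Theta(\log N)$ effective levels, the per-level value growth must be geometric, and the bound on the algorithm's per-level Nash welfare must be essentially tight rather than order-of-magnitude; any slack surfaces as a $(\log\log N)^{O(1)}$ factor separating what one can prove from the ideal $\log N$, which is precisely the difference between the $\log^{1-o(1)}N$ bound asserted here and the weaker $\log^{1-\varepsilon}N$ bound claimed in the original paper. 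I expect getting this accounting tight, while keeping the construction exactly $1$-balanced, to be where essentially all the work lies.
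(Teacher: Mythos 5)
The paper never proves this lemma: it is imported verbatim from \citet{BanerjeeGGJ:SODA:2022} (their Theorem~3), with only a footnote observing that their proof already yields the $\log^{1-o(1)}N$ form. So there is no in-paper argument to compare against; what you wrote has to stand on its own as a reconstruction of the lower bound, and as it stands it does not. The decisive quantitative steps are all asserted rather than carried out: the lower bound on the offline Nash welfare is delegated to ``a short counting argument,'' the upper bound on the algorithm's Nash welfare to ``a potential/duality argument accumulating the per-level losses,'' and — most importantly — the step you yourself identify as the crux, namely showing that a \emph{constant fraction} of agents each lose close to a $\log N$ factor (so that the loss survives the $(\cdot)^{1/N}$ averaging in the geometric mean), is left entirely open, with the subset-shrinkage rate, the per-level value grading, and the tightness of the per-level bound all unspecified. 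You explicitly say this is ``where essentially all the work lies,'' which is an accurate self-assessment: without it there is no proof of any nontrivial gap, let alone the specific $\log^{1-o(1)}N$ exponent versus $\log^{1-\varepsilon}N$.

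There is also an internal tension in the construction as described that would have to be resolved before the accounting can even begin. Your $1$-balance argument rests on the claim that ``the subsets at each level partition all agents, [so] every agent values exactly one item per level,'' but your adversary is described as letting only the \emph{surviving} child ``continue to receive items at deeper levels.'' If non-surviving agents stop seeing positively-valued items, their monopolist utilities stall and exact $1$-balance is no longer automatic — you would need a final padding phase or a per-agent value rescaling, and the latter interacts with the adaptive choices. If instead every subset in the current partition keeps receiving items at every level, balance is fine but the per-level damage to the targeted agents is diluted and the loss accounting changes. Either variant can presumably be made to work (this is essentially what Banerjee et al.\ do), but the choice affects both the balance claim and the Nash-welfare gap, so it cannot be left ambiguous.
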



Next we prove that the logarithmic dependence on the balance ratio or the impartiality ratio is necessary.
The main ingredient is the instance given by Table~\ref{tab:lower-bound}, which is a variant of another hard instance by \citet{BanerjeeGGJ:SODA:2022} (c.f., Theorem 4 therein).
The instance has $N$ agents, and $T = N$ items of unit supplies.
Agent $i$'s value for item $t$ is $N^{2i}$ if $t \ge i$, and is $0$ otherwise.
Recall that for lower bounds it suffices to consider deterministic algorithms.
Hence, we may assume without loss of generality that agent $i$ receives the least amount of item $t = i$ among agents $i$ to $N$ in the algorithm's allocation.

\begin{table}[t]
    \caption{Illustration of a hard instance. Rows are items and columns are agents. The number in the intersection of the $t$-th row and the $i$-th column is agent $i$'s value for item $t$.}
    \label{tab:lower-bound}
    \centering
    \begin{tabular}{cccccc}
        \hline
        $t$ & Agent 1& Agent 2 & Agent 3& $\cdots$ &Agent $n$ \\
        \hline
        1 & $n^2$ & $n^2$ & $n^2$ & & $n^2$ \\
        2 & 0 & $n^4$ & $n^4$ & & $n^4$ \\
        3 & 0 & 0 & $n^6$ & & $n^6$ \\
        \multicolumn{6}{c}{$\cdots$} \\
        $n$ & 0 & 0 & 0 & & $n^{2n}$\\
        \hline
    \end{tabular}
\end{table}

\begin{lemma}
    \label{lem:lower-bound-competitive-ratio}
    No algorithm can be better than $\frac{n-1}{e}$ competitive for the instance given by Table~\ref{tab:lower-bound}.
\end{lemma}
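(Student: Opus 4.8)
The plan is to track the execution of the algorithm (deterministic, as already noted above) on the instance of Table~\ref{tab:lower-bound}, using the assumption, established above, that agent $i$ receives the least amount of item $i$ among agents $i, i+1, \dots, n$. Since only these $n-i+1$ agents have positive value for item $i$ and its supply is one, agent $i$ receives at most a $\frac{1}{n-i+1}$ fraction of item $i$ (the inequality also covers the case where the algorithm wastes some of item $i$ on the zero-value agents $1,\dots,i-1$). As agent $i$ has zero value for every item $t>i$ and value $n^{2t}$ for every item $t\le i$, its utility under the algorithm satisfies
\[
    u_i = \sum_{t=1}^{i} n^{2t} x_{it} \le n^{2i} x_{ii} + \sum_{t=1}^{i-1} n^{2t} \le n^{2i}\Big(\frac{1}{n-i+1} + \frac{1}{n^2-1}\Big),
\]
where the last inequality uses $x_{it}\le 1$ and the geometric series bound $\sum_{t=1}^{i-1} n^{2t} < \frac{n^{2i}}{n^2-1}$.

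First I would multiply these $n$ bounds. Writing $k = n-i+1$, the product of the parenthesized factors equals $\prod_{k=1}^{n}\big(\frac{1}{k} + \frac{1}{n^2-1}\big) = \frac{1}{n!}\prod_{k=1}^{n}\big(1 + \frac{k}{n^2-1}\big)$, and $1+y\le e^y$ bounds the second product by $\exp\!\big(\tfrac{n(n+1)}{2(n^2-1)}\big) = \exp\!\big(\tfrac{n}{2(n-1)}\big) \le e$ for $n\ge 2$. Hence $\prod_{i=1}^n u_i \le \frac{e}{n!}\,n^{n(n+1)}$, so the algorithm's Nash welfare is at most $n^{n+1}(e/n!)^{1/n}$. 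On the other hand, allocating item $i$ entirely to agent $i$ gives each agent utility $n^{2i}$, so the optimal Nash welfare is at least $\big(\prod_{i=1}^{n} n^{2i}\big)^{1/n} = n^{n+1}$. Dividing, the competitive ratio on this instance is at least $(n!/e)^{1/n}$.

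To finish I would invoke two elementary estimates: $(n!)^{1/n}\ge n/e$, equivalent to $n!\ge (n/e)^n$ which follows from $e^n \ge n^n/n!$; and $e^{1/n} < \frac{n}{n-1}$ for $n\ge 2$, obtained by comparing the Taylor series of $e^{1/n}$ termwise with that of $\frac{1}{n-1} = \sum_{k\ge 1} n^{-k}$. Together they give $(n!/e)^{1/n} = (n!)^{1/n}/e^{1/n} \ge \frac{n}{e\cdot e^{1/n}} > \frac{n-1}{e}$, proving the lemma. The one delicate point is that the per-agent bound is not simply $\frac{n^{2i}}{n-i+1}$ — agent $i$ may additionally soak up all the cheap earlier items $1,\dots,i-1$ — so one must carry the $\frac{1}{n^2-1}$ correction through the product and then recover the lost slack using the sharper facts $\exp\!\big(\tfrac{n}{2(n-1)}\big)\le e$ and $e^{1/n}<\tfrac{n}{n-1}$. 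I expect this constant-factor bookkeeping, rather than any conceptual hurdle, to be the main work in getting the exact bound $\frac{n-1}{e}$ instead of merely $\Omega(n)$.
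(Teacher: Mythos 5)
Your proposal is correct and follows essentially the same route as the paper: the key step $x_{ii}\le\frac{1}{n-i+1}$, the benchmark allocation giving $u_i^*=n^{2i}$, and the Stirling-type bound $(n!)^{1/n}\ge n/e$ are all identical, with only the bookkeeping of the earlier items' contribution handled differently (the paper absorbs it into a per-agent factor $\frac{n}{n-1}$, you carry an additive $\frac{1}{n^2-1}$ through the product and offset it via $e^{1/n}<\frac{n}{n-1}$). Both yield the same bound $\frac{n-1}{e}$.
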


\begin{proof}
    For any $1 \le i \le n$, by the assumption that agent $i$ receives the least amount of item $t = i$ among agents $i$ to $n$, we get that:
    \[
        x_{ii} \le \frac{1}{n-i+1}
        ~.
    \]

    Therefore, agent $i$'s utility is at most
    \[
        u_i \leq \sum_{t=1}^{i-1} n^{2t} + \frac{n^{2i}}{n-i+1} \leq \frac{n^{2i}}{n-i+1} \cdot \frac{n}{n-1}.
    \]

    On the other hand, if we allocate each item $1 \le t \le n$ to agent $i = t$ (which is the optimal allocation), agent $i$'s utility would be:
    \[
        u^*_i = n^{2i}
        ~.
    \]

    
    
    Therefore, the competitive ratio of the algorithm is at best:
    \[
        \left( \prod_{i=1}^n \frac{u^*_i}{u_i} \right)^{\frac{1}{n}} 
        \ge
        \left( \prod_{i=1}^n \frac{n^{2i}}{\frac{n^{2i}}{n-i+1} \cdot \frac{n}{n-1}} \right)^{\frac{1}{n}}
        =
        \frac{n-1}{n} \big( n! \big)^{\frac{1}{n}} > \frac{n-1}{e}
        ~.
    \]
    
\end{proof}

\begin{lemma}
    \label{lem:lower-bound-balance-impartiality-ratios}
    The imbalance ratio and impartiality ratio of the instance given by Table~\ref{tab:lower-bound} are at most $n^{2n}$.
\end{lemma}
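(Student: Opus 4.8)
The plan is to bound the two ratios directly from the explicit values in Table~\ref{tab:lower-bound}, using the formulas already computed in the proof of Lemma~\ref{lem:lower-bound-competitive-ratio}. For the \emph{balance ratio}, recall that agent $i$'s monopolist utility is $\sum_{t=1}^{n} s_t v_{it} = \sum_{t=i}^{n} n^{2i} = (n-i+1)\, n^{2i}$, since $s_t = 1$ for all items and $v_{it} = n^{2i}$ exactly for $t \ge i$. The maximum over $i$ is attained essentially at $i = n$ (the factor $n-i+1$ is dwarfed by the geometric growth of $n^{2i}$), giving at most $n^{2n}$, and the minimum over $i$ is at least $n^{2}$ (attained near $i=1$, where the monopolist utility is $n \cdot n^{2} = n^{3}$; even using the crude lower bound $n^2$ suffices). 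Hence $\lambda^* \le n^{2n}/n^{2} \le n^{2n}$.

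For the \emph{impartiality ratio}, I would use the optimal allocation $\vec{x^*}$ described in the proof of Lemma~\ref{lem:lower-bound-competitive-ratio}, namely allocating item $t$ entirely to agent $i = t$, which yields $u_i^* = n^{2i}$. Then for any agents $i, j$ we have $u_i^*/u_j^* = n^{2i}/n^{2j} \le n^{2n}/n^{2} \le n^{2n}$, so this allocation is $n^{2n}$-impartial. Strictly speaking, Definition~\ref{def:impartial-instance} requires \emph{every} Nash welfare maximizing allocation to be $\mu$-impartial, and the impartiality ratio $\mu^*$ is the smallest such $\mu$; but exhibiting one Nash welfare maximizing allocation that is $n^{2n}$-impartial is not immediately enough. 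The clean way around this is to observe that the stated allocation is in fact the \emph{unique} Nash welfare maximizing allocation — which follows because the logarithmic objective is strictly concave in each agent's utility and the values $n^{2i}$ grow fast enough that any deviation strictly decreases the product (this is essentially the content of Lemma~\ref{lem:maximum-mu-value}: an agent with a much smaller value for an item should receive none of it, and here agent $i$'s value for item $i$ exceeds every later agent's value $n^{2j}$ for that item whenever $2i \ge 2j$, forcing a ``staircase'' structure). Alternatively, and more simply, I would just directly bound $\mu^*$ via the already-established Lemma~\ref{lem:balance-impartial}: $\mu^* \le \lambda^* N = \lambda^* n$. Combined with the $\lambda^* \le n^{2n}$ bound this gives $\mu^* \le n^{2n+1}$, which is slightly weaker than claimed; to get exactly $n^{2n}$ one does need the direct argument with the explicit optimal allocation plus a uniqueness remark.

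The main obstacle is therefore the uniqueness (or at least the ``every optimum is impartial'') subtlety in the definition of $\mu^*$. I expect the cleanest resolution is a short lemma-internal remark: since item $t$'s supply is $1$ and agent $i$'s value $n^{2i}$ strictly dominates the values $n^{2j}$ ($j > i$) of all later agents while all earlier agents have value $0$, Lemma~\ref{lem:maximum-mu-value} forces that in any Nash welfare maximizing allocation only agents $\{1,\dots,t\}$ can receive item $t$; a standard exchange argument (identical in spirit to the proof of Lemma~\ref{lem:maximum-mu-value}) then pins down $u_i^* = n^{2i}$ for all $i$ by backward induction on $i = n, n-1, \dots, 1$. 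Once the optimal utility profile is identified, both bounds are one-line geometric-series estimates. I would present the balance-ratio bound first (it is unconditional and needs no optimum), then the impartiality-ratio bound, noting that the two bounds together are consistent with Lemma~\ref{lem:balance-impartial} up to the factor $n$.
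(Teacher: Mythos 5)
Your overall plan is the intended one (the paper in fact states this lemma without any proof, treating both bounds as immediate computations on the instance), and your final bounds are correct; but two points need fixing before this is a proof. First, you computed the monopolist utilities from the paper's text sentence ``$v_{it}=n^{2i}$ for $t\ge i$'', which contradicts Table~\ref{tab:lower-bound} itself and the way the values are used in the proof of Lemma~\ref{lem:lower-bound-competitive-ratio}: in the table, item $t$ is valued at $n^{2t}$ by \emph{all} agents $i\ge t$ and at $0$ by agents $i<t$. Under the table's convention agent $i$'s monopolist utility is $\sum_{t=1}^{i} n^{2t}$, so $\lambda^* \le \bigl(\sum_{t=1}^{n} n^{2t}\bigr)/n^{2} \le n^{2n}$ still holds, but your formula $(n-i+1)n^{2i}$ describes a different instance.

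Second, and more substantively, your treatment of the impartiality ratio leans on two unverified claims. (a) You take from the parenthetical remark in Lemma~\ref{lem:lower-bound-competitive-ratio} that the diagonal allocation (item $t$ entirely to agent $t$) is Nash-welfare maximizing; that remark is not needed there (any feasible allocation lower-bounds the optimum in that proof) but it is exactly what you need here, so you must check it. The check is one line: with utilities $u_i=n^{2i}$, the bang-per-buck of item $t$ is $v_{it}/u_i = n^{2t-2i}$ for $i\ge t$ and $0$ otherwise, uniquely maximized at $i=t$ with value $1$, so the first-order (KKT) conditions of the Eisenberg--Gale program hold. (b) Your resolution of the ``every optimum'' issue in Definition~\ref{def:impartial-instance} via a value-dominance/staircase argument is garbled for this instance: under the table all agents $j\ge t$ have the \emph{same} value $n^{2t}$ for item $t$, so nothing ``strictly dominates'' and Lemma~\ref{lem:maximum-mu-value} does not force the staircase. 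The clean fix is that the optimal \emph{utility vector} (not the allocation) is unique: the set of feasible utility vectors is convex and $\sum_i\log u_i$ is strictly concave on the positive orthant, so every Nash-welfare maximizing allocation yields the same $u^*$, namely $u_i^*=n^{2i}$, giving $\mu^* = n^{2n-2}\le n^{2n}$. With those two repairs the argument is complete; the fallback via Lemma~\ref{lem:balance-impartial} that you mention only gives $n^{2n+1}$, as you note, and is not needed.
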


We can now derive our lower bounds as corollaries of the above lemmas.

\begin{theorem}
    \label{thm:lower-bound-balance}
No algorithm can be better than $\min \big\{ \log^{1-o(1)} N \lambda^* ~,~ \frac{N-1}{e} \big\}$-competitive.
\end{theorem}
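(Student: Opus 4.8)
The plan is to combine the two lower bounds already available in the excerpt, since the theorem claims a bound that is the minimum of two quantities, and each quantity corresponds to one of the pre-established hard instances. The first term, $\log^{1-o(1)} N\lambda^*$, should come from scaling up the hard instance of Lemma~\ref{lem:lower-bound-BanerjeeGGJ}; the second term, $\frac{N-1}{e}$, comes directly from Lemma~\ref{lem:lower-bound-competitive-ratio} together with Lemma~\ref{lem:lower-bound-balance-impartiality-ratios}, which certifies that the instance in Table~\ref{tab:lower-bound} is $n^{2n}$-balanced, so $\log \lambda^* = O(n \log n)$ there and the $\frac{n-1}{e}$ ratio is indeed $\Omega(\log^{1-o(1)} \lambda^*)$-ish but more importantly just gives the raw $\frac{N-1}{e}$ bound.

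The main work is the first term. First I would take the hard family underlying Lemma~\ref{lem:lower-bound-BanerjeeGGJ}, which is $1$-balanced and forces competitive ratio $\log^{1-o(1)} m$ on $m$ agents; then I would embed it into an instance with $N$ agents while inflating the balance ratio to roughly $\lambda^*$. Concretely, given target parameters $N$ and $\lambda$, one picks $m = \min\{N, \text{something like } \log \lambda\}$ — actually the natural construction is to use $m$ ``active'' agents running the $\log^{1-o(1)} m$ hard instance, pad with $N-m$ dummy agents whose values are scaled so that the balance constraint becomes exactly $\lambda$, and choose $m \approx \log(N\lambda)$ so that $\log^{1-o(1)} m = \log^{1-o(1)} N\lambda^*$. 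Since the Nash welfare is a geometric mean, the dummy agents (each getting a proportional share by Lemma~\ref{lem:nsw-proportionality}) contribute only a bounded factor, so the $m$-agent hardness survives up to lower-order losses in the exponent. The key point to check is that after padding and rescaling, the instance really is $\lambda^*$-balanced with $\lambda^* \le \lambda$, and that the competitive ratio degradation from the padding agents is absorbed into the $o(1)$ in the exponent.

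I would then state: for any $N$ and any target balance ratio $\lambda \ge 1$, there is a $\lambda^*$-balanced instance with $\lambda^* \le \lambda$ on $N$ agents forcing competitive ratio $\log^{1-o(1)} N\lambda$, and separately the Table~\ref{tab:lower-bound} instance forces $\frac{N-1}{e}$ with balance ratio $n^{2n}$. Taking the worse of the two families over all admissible parameter ranges gives the claimed $\min\{\log^{1-o(1)} N\lambda^*, \frac{N-1}{e}\}$ bound: when $\lambda^*$ is large relative to $N$ the Table~\ref{tab:lower-bound}-type instance is the binding one (and its impartiality/balance ratio bound from Lemma~\ref{lem:lower-bound-balance-impartiality-ratios} keeps it in the right regime), while when $\lambda^*$ is moderate the padded Banerjee–Gkatzelis–Gorokh–Jin instance dominates.

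The hard part will be the padding argument for the first term: making precise how to interleave a $1$-balanced $m$-agent hard instance with dummy agents so that (i) the total number of agents is exactly $N$, (ii) the balance ratio lands at the prescribed value $\lambda^*$ rather than overshooting, and (iii) the dummy agents cannot be exploited by the algorithm to recover Nash welfare — this last point needs the dummies to have items that only they value (so their utilities are essentially forced) while the active hard instance proceeds unaffected, and one must verify the geometric-mean bookkeeping shows the ratio is still $\log^{1-o(1)}$ of the right quantity. Everything else — invoking Lemma~\ref{lem:lower-bound-competitive-ratio}, Lemma~\ref{lem:lower-bound-balance-impartiality-ratios}, and Lemma~\ref{lem:lower-bound-BanerjeeGGJ} as black boxes, and taking a minimum — is routine.
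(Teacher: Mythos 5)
There is a genuine gap in your plan for the first term. The hard family behind Lemma~\ref{lem:lower-bound-BanerjeeGGJ} forces a ratio that is only \emph{logarithmic in the number of agents participating in it}, so no matter how you embed it, with $m \le N$ active agents it can never certify more than $\log^{1-o(1)} N$; in particular your calibration ``choose $m \approx \log(N\lambda)$ so that $\log^{1-o(1)} m = \log^{1-o(1)} N\lambda^*$'' is simply false --- with that choice of $m$ the instance forces only $\big(\log\log (N\lambda)\big)^{1-o(1)}$, exponentially weaker than claimed. The regime that actually needs work is $N < \lambda^* \le N^{2N}$, where the target $\log^{1-o(1)}(N\lambda^*) = \log^{1-o(1)}\lambda^*$ can be as large as, say, $N^{0.9}$, far beyond anything a $1$-balanced instance can deliver. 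The $\lambda^*$-dependence must instead come from the Table~\ref{tab:lower-bound} instance run at a \emph{small} scale: pick $n = \Theta\big(\tfrac{\log\lambda^*}{\log\log\lambda^*}\big)$ so that its balance ratio $n^{2n}$ equals $\lambda^*$, and pad up to $N$ agents by taking $N/n$ disjoint copies of it; then Lemmas~\ref{lem:lower-bound-competitive-ratio} and \ref{lem:lower-bound-balance-impartiality-ratios} give ratio $\tfrac{n-1}{e} = \log^{1-o(1)}\lambda^*$ at balance ratio exactly $\lambda^*$. This is how the paper handles that case; your proposal never produces an instance of this kind, and the 1-balanced family should only be invoked in the easy regime $\lambda^* \le N$, where $\log^{1-o(1)} N = \log^{1-o(1)}(N\lambda^*)$ already.

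A second, independent problem is your padding scheme itself. If only $m$ agents run a hard instance and the remaining $N-m$ are dummies whose utilities are essentially forced (ratio $\approx 1$ between OPT and the algorithm), the overall competitive ratio is the $N$-th root of the product of per-agent ratios, i.e.\ roughly $R^{m/N}$ where $R$ is the gap achieved on the active agents; for $m \ll N$ this collapses toward $1$, so the hardness is diluted rather than preserved. The paper avoids this by padding with \emph{copies of the hard instance} (every agent is active in some copy), which keeps the geometric-mean gap intact. Your use of Lemmas~\ref{lem:lower-bound-competitive-ratio} and \ref{lem:lower-bound-balance-impartiality-ratios} for the $\tfrac{N-1}{e}$ term (taking $n=N$) is fine and matches the paper, but the first term needs to be rebuilt along the lines above.
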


\begin{proof}
    If $N \ge \lambda^*$, the stated lower bound is $\log^{1-o(1)} N$ which follows by Lemma~\ref{lem:lower-bound-BanerjeeGGJ}.

    If $N < \lambda^*$ but $N^{2N} \ge \lambda^*$, the stated lower bound is $\log^{1-o(1)} \lambda^*$. 
    Consider the instance given by Table~\ref{tab:lower-bound} choosing $n = \Theta\big(\frac{\log \lambda^*}{\log\log \lambda^*}\big)$ such that $\lambda^* = n^{2n}$.
    Note that this implies $N \ge n$.
    Create $\frac{N}{n}$ copies of this instance so that the number of agents becomes $N$.
    Then, the stated lower bound follows by Lemmas~\ref{lem:lower-bound-competitive-ratio} and \ref{lem:lower-bound-balance-impartiality-ratios}.

    Finally if $N^{2N} < \lambda^*$.
    The stated bound is $\frac{N-1}{e}$, which follows by Lemmas~\ref{lem:lower-bound-competitive-ratio} and \ref{lem:lower-bound-balance-impartiality-ratios}, choosing $n = N$ in the instance given by Table~\ref{tab:lower-bound}.
\end{proof}

\begin{theorem}
    \label{thm:lower-bound-impartial}
No algorithm can be better than $\min \big\{ \log^{1-o(1)} \mu^* ~,~ \frac{N-1}{e} \big\}$-competitive.
\end{theorem}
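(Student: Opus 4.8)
The plan is to follow the case analysis in the proof of Theorem~\ref{thm:lower-bound-balance}, but using only the hard instance of Table~\ref{tab:lower-bound} and disjoint copies of it; the $1$-balanced instance behind Lemma~\ref{lem:lower-bound-BanerjeeGGJ} is not needed here, because the claimed bound has no $N$ inside the logarithm. Fix the target number of agents $N$, and split into two regimes according to whether $\mu^* \le N^{2N}$.

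If $\mu^* > N^{2N}$, I would take the instance of Table~\ref{tab:lower-bound} with $n = N$. By Lemma~\ref{lem:lower-bound-balance-impartiality-ratios} its impartiality ratio is at most $N^{2N} < \mu^*$, so it lies in the class being quantified over, and by Lemma~\ref{lem:lower-bound-competitive-ratio} every online algorithm has competitive ratio more than $\frac{N-1}{e}$ on it, which is at least $\min\{\log^{1-o(1)}\mu^*, \frac{N-1}{e}\}$. (In fact $\mu^* > N^{2N}$ forces $\log^{1-o(1)}\mu^* \ge \frac{N-1}{e}$ for a suitable $o(1)$, so the stated bound is genuinely $\frac{N-1}{e}$ in this regime, but we do not need this.) If instead $\mu^* \le N^{2N}$, I would choose $n$ with $n^{2n} \le \mu^* < (n+1)^{2(n+1)}$, so that $n = \Theta\!\big(\frac{\log\mu^*}{\log\log\mu^*}\big)$; monotonicity of $x \mapsto x^{2x}$ together with $n^{2n}\le\mu^*\le N^{2N}$ gives $n \le N$. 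Build the instance from $N/n$ disjoint copies of the $n$-agent Table~\ref{tab:lower-bound} instance (each copy with its own items), for $N$ agents in total. Since the copies share no items and Nash welfare is a product over agents, every Nash-welfare-maximizing allocation of the union decomposes into per-copy optimal allocations, so the union's impartiality ratio equals that of a single copy, namely at most $n^{2n}\le\mu^*$. The agent relabeling that is ``without loss of generality'' in the proof of Lemma~\ref{lem:lower-bound-competitive-ratio} may be performed inside each copy independently, so $\prod_{i\in c}\frac{u_i^*}{u_i} > \big(\frac{n-1}{e}\big)^{n}$ for every copy $c$; multiplying over the $N/n$ copies and taking the $N$-th root, the competitive ratio on the union exceeds $\frac{n-1}{e} = \Theta\!\big(\frac{\log\mu^*}{\log\log\mu^*}\big) = \log^{1-o(1)}\mu^* \ge \min\{\log^{1-o(1)}\mu^*, \frac{N-1}{e}\}$.

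I do not expect a real obstacle, as this is essentially a transcription of the last two cases of the proof of Theorem~\ref{thm:lower-bound-balance}. The points needing a moment of care are: that taking disjoint copies neither raises the impartiality ratio (so the constructed instance truly satisfies the hypothesis of being $\mu^*$-impartial) nor alters the per-agent competitive-ratio analysis; the asymptotic identification of $\frac{n-1}{e}$ with $\log^{1-o(1)}\mu^*$ and the verification that $n\le N$; and the benign divisibility issue in ``$N/n$ copies'', which we treat as loosely as the proof of Theorem~\ref{thm:lower-bound-balance} does.
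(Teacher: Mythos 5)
Your proposal is correct and follows essentially the same route as the paper's own proof: the same two-case split on whether $\mu^* \le N^{2N}$, the same instance from Table~\ref{tab:lower-bound} with $n = \Theta\big(\frac{\log\mu^*}{\log\log\mu^*}\big)$ (or $n=N$) together with $N/n$ disjoint copies, and the same invocation of Lemmas~\ref{lem:lower-bound-competitive-ratio} and~\ref{lem:lower-bound-balance-impartiality-ratios}. The details you spell out (copies preserving the impartiality ratio, per-copy relabeling, $n \le N$) are exactly the steps the paper leaves implicit, and they check out.
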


\begin{proof}
    If $N^{2N} \ge \mu^*$, the stated lower bound is $\log^{1-o(1)} \mu^*$. 
    Consider the instance given by Table~\ref{tab:lower-bound} choosing $n = \Theta\big(\frac{\log \mu^*}{\log\log \mu^*}\big)$ such that $\mu^* = n^{2n}$.
    Note that this implies $N \ge n$.
    Create $\frac{N}{n}$ copies of this instance so that the number of agents becomes $N$.
    Then, the stated lower bound follows by Lemmas~\ref{lem:lower-bound-competitive-ratio} and \ref{lem:lower-bound-balance-impartiality-ratios}.

    Finally if $N^{2N} < \mu^*$.
    The stated lower bound is $\frac{N-1}{e}$, which follows by Lemmas~\ref{lem:lower-bound-competitive-ratio} and \ref{lem:lower-bound-balance-impartiality-ratios}, choosing $n = N$ in the instance given by Table~\ref{tab:lower-bound}.
\end{proof}

We remark that Theorem~\ref{thm:lower-bound-impartial} still holds even if the agents' values for the items are either $0$ or $1$.
For each item $t$ in Table~\ref{tab:lower-bound}, instead of letting it have unit supply and letting agents $t$ to $n$ have values $n^{2t}$ for it, we can instead let it have supply $n^{2t}$ and let agents $t$ to $n$ have values $1$ for it.
Scaling the allocation of item $t$ in the original instance by $n^{2t}$ factor would give an allocation for the $0$-$1$ value version, with the same utilities for all agents.

\section{Conclusion}

In this paper we initiate the study of online Nash welfare maximization \textit{without predictions}. We define \textit{balance ratio} and \textit{impartiality ratio} for an instance, and design online algorithms whose competitive ratios only depend on the logarithms of the aforementioned ratios of agents' utilities and the number of agents. 
One possible extension of our work is to close the gap between the lower and upper bounds in general instances.
In addition to that, our technique may provide an insight on how to remove predictions for other online problems.

\bibliographystyle{plainnat}
\bibliography{reference}

\end{document}